\documentclass[11pt]{article}

\usepackage[utf8]{inputenc}
\usepackage[T1]{fontenc}
\usepackage{amsmath,amssymb,amsthm}
\usepackage{mathtools}
\usepackage{xcolor}
\usepackage{tikz}
\usepackage{graphicx}
\usepackage{overpic}
\definecolor{C0}{HTML}{1f77b4}
\definecolor{C1}{HTML}{ff7f0e}
\definecolor{C2}{HTML}{2ca02c}
\definecolor{C3}{HTML}{d62728}
\definecolor{C4}{HTML}{9467bd}
\definecolor{C5}{HTML}{8c564b}
\definecolor{C6}{HTML}{e377c2}
\definecolor{C7}{HTML}{7f7f7f}
\definecolor{C8}{HTML}{bcbd22}
\definecolor{C9}{HTML}{17becf}
\usepackage[colorlinks=true, linkcolor=blue!60!black, citecolor=blue!60!black, urlcolor=blue!60!black]{hyperref}
\usepackage{geometry}
\newtheorem{theorem}{Theorem}[section]
\newtheorem{proposition}[theorem]{Proposition}
\newtheorem{lemma}[theorem]{Lemma}

\theoremstyle{definition}
\newtheorem{definition}[theorem]{Definition}

\newtheorem{remark}[theorem]{Remark}

\newcommand{\cB}{\mathcal{B}}
\newcommand{\cC}{\mathcal{C}}
\newcommand{\cD}{\mathcal{D}}
\newcommand{\cT}{\mathcal{T}}
\newcommand{\cZ}{\mathcal{Z}}
\newcommand{\bC}{\mathbb{C}}
\newcommand{\bZ}{\mathbb{Z}}
\newcommand{\bD}{\mathbb{D}}
\newcommand{\Bord}{\mathrm{Bord}}
\newcommand{\Hilb}{\mathrm{Hilb}}
\newcommand{\Vect}{\mathrm{Vect}}
\newcommand{\op}{\mathrm{op}}
\newcommand{\SO}{SO}
\newcommand{\Or}{O} 
\newcommand{\dg}{\dagger}

\title{Topological defects in reflection positive topological field theories}
\author{Lukas Müller, LMU Munich}
\date{}

\begin{document}
\maketitle

\begin{abstract} 
Topological defects in quantum field theories are believed to assemble into higher categories with extra structure. This has been made precise for defects in $2$- and $3$-dimensional oriented topological quantum field theories by Davydov, Kong, and Runkel and by Carqueville, Meusburger, and Schaumann, respectively. In this paper we study the extra structure present on these categories when the topological field theory is additionally reflection positive. We define reflection defect TQFTs as symmetric monoidal functors out of a defect bordism category that intertwine orientation reversal with complex conjugation; they are \emph{reflection positive} if they satisfy an additional positivity condition. Our main result is that in two dimensions the bicategory of defects $\cT_\cZ$ associated to a reflection defect TQFT carries the natural structure of an $\Or(2)$-dagger bicategory, a structure we define explicitly. If the theory is reflection positive, $\cT_\cZ$ can be equipped with additional structure closely related to the definition of a 3-Hilbert space (the two agree up to some finiteness and completeness conditions).
\end{abstract}

\tableofcontents

\section{Introduction}
\label{sec:Intro}
Topological defects in quantum field theory have been a subject of intense study over the last decades, with renewed recent interest because of their connection to generalized symmetries~\cite{BBFT24,CDIS22,FMT24,GKSW15,SN23,Sha23}. Such defects in an $n$-dimensional topological field theory assemble into an $n$-category $\cD$ with additional structure reflecting the tangential structures the defects carry. From both a physical and a mathematical perspective it is desirable to include unitarity in the picture. The required theory of unitary higher categories is in its infancy, and there are many open questions in this context. On the physics side, the action of generalized symmetries on extended operators---their generalized charges---is captured by higher representation theory and via the Symmetry TFT~\cite{BBG23a,BBG23b,BSN23a,BSN23b,CHFHS24}; its unitarity has recently been studied by Bartsch~\cite{Bar25} through $\ast$-representations of the associated tube algebras. On the mathematical side, the theory of higher dagger categories has recently emerged as a language to formulate these structures, as well as the general structure of topological defects~\cite{FHJF24, CL25, Mue25}. For oriented defects in oriented 2-dimensional theories this additional structure is a pivotal structure on the defect bicategory~\cite{DKR11,Car16}, which can be identified with an $\SO(2)$-dagger structure in the sense of~\cite{CL25, Mue25}; in three dimensions defects form a Gray category with duals~\cite{CMS}.

Topological quantum field theories were formulated by Atiyah~\cite{Ati88} as symmetric monoidal functors $\cZ \colon \Bord_n \to \Vect_\bC$, where $\Bord_n$ is the symmetric monoidal category whose objects are closed oriented $(n-1)$-dimensional manifolds and whose morphisms from $\Sigma$ to $\Sigma'$ are (equivalence classes of) $n$-dimensional oriented compact manifolds with boundary identified with $\overline{\Sigma} \sqcup \Sigma'$, composed by gluing, where $\overline{\Sigma}$ denotes the orientation reversal of $\Sigma$. To capture locality fully one allows the objects of $\Bord_n$ to be cut further, leading to the notion of a fully extended topological quantum field theory, which assigns higher categorical data to manifolds of higher codimension~\cite{BD,Lur}. Understanding unitarity in this extended setting is an active research direction.

In this short paper we approach the problem from a slightly different direction, providing a new avenue for insight and one part of a fuller picture that has yet to emerge in full generality. It rests on the operator--state correspondence: the higher categories formed by defects give access to the categorical data an extended topological quantum field theory assigns in higher codimension~\cite{Kap10}, so studying defects in non-extended topological field theories offers another route towards the same data.

The question we answer in this paper is: what is the higher categorical structure corresponding to topological defects in non-extended unitary (or reflection positive) topological quantum field theories? We focus on oriented 2-dimensional theories---treating the elementary 1-dimensional case first as a warm-up---and show that their topological defects form an $\Or(2)$-dagger bicategory satisfying additional positivity conditions, essentially those appearing in the definition of a 3-Hilbert space in~\cite{CHFHS24}.      

To define reflection positive topological field theories with defects, we combine two strands of the literature. The first is the defect bordism categories $\Bord_n^{\textrm{def}}(\bD)$ of~\cite{CRS18}, whose objects are stratified $(n-1)$-dimensional manifolds labelled by defect data $\bD$ and whose morphisms are stratified labelled cobordisms between them; an example of such a morphism in two dimensions is:
\begin{equation} 
\label{eq:DefectBordIntro}
\vcenter{\hbox{
		\begin{overpic}[scale=0.5]{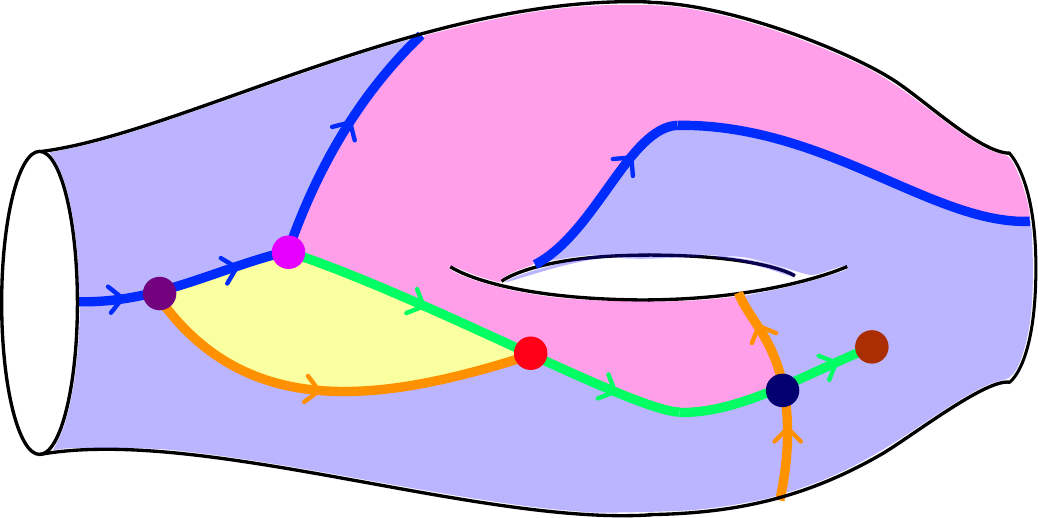}
			\put(50,5){$u_1$}
			\put(92,22){$u_1$}
			\put(15,34){$u_1$} 
			\put(32,15){$u_2$} 
			\put(45,36){$u_3$}
			\put(8,22.5){$X_1$}
			\put(30,8){$X_2$} 
			\put(20,26){$X_3$}
			\put(31.5,30){$X_{10}$}
			\put(38,23){$X_4$} 
			\put(63,12){$X_5$}
			\put(76,16.5){$X_6$}
			\put(75,38){$X_7$}
			\put(77,5){$X_8$}
			\put(66,17){$X_9$}
			\put(12,18){$\varphi_1$}
			\put(27,21.5){$\varphi_2$}
			\put(51.5,18){$\varphi_3$}
			\put(70.5,8.5){$\varphi_4$}
			\put(84,19){$\varphi_5$} 
			\put(13.5,23.5){\small $\color{C1}+$}
			\put(29.5,25.5){\small$\color{C2}-$}
			\put(49.9,11.5){\small$\color{C3}+$} 
			\put(77,10){\small$\color{C4}+$}
			\put(86,14.5){\small$\color{C5}-$}
\end{overpic}}}
\end{equation} 
Here the shaded $2$-strata carry bulk labels $u_i$, the coloured $1$-strata are defect lines labelled by $X_i$ and carrying orientations and coorientations, and the points $\varphi_i$ where they meet are the junctions.

The second is the description of non-extended reflection positive topological field theories in the formulation of Freed--Hopkins, and its reformulation in terms of dagger categories~\cite{SS23,Ste23}.
Reflection positive topological field theories (without defects) were defined in the influential work of Freed and Hopkins~\cite{FH16} as the combination of two distinct parts: in a \emph{reflection theory} the geometric operation of orientation reversal (or an appropriate generalization for other tangential structures) is intertwined with the algebraic operation of complex conjugation, while \emph{positivity} is the further property that certain hermitian pairings, defined using the reflection structure, are positive definite.   

Incorporating topological defects into this definition of reflection positivity is straightforward (though to the best of our knowledge this is the first place where this has been done): reversing the orientation of a stratified manifold together with all of its strata gives a $\bZ_2$ reflection action on $\Bord_n^{\textrm{def}}(\bD)$. A \emph{reflection defect topological field theory} is then simply a $\bZ_2$-equivariant symmetric monoidal functor $\Bord_n^{\textrm{def}}(\bD) \to \Vect_\bC$ (where $\bZ_2$ acts on $\Vect_\bC$ by complex conjugation), and reflection positivity is again the positivity of certain pairings coming from the reflection structure. We set up these reflection positive defect bordism categories in arbitrary dimension $n$ (Section~\ref{sec:proposal}) before specializing to $n = 1, 2$.  

In two dimensions one can extract from this data a bicategory of topological defects: its objects are the labels for 2-strata, its 1-morphisms are lists of labels for 1-dimensional strata, and its 2-morphism spaces are the state spaces $\cZ(E_{X,Y})$ the theory assigns to certain decorated circles $E_{X,Y}$, encoding the operator--state correspondence; composition is supplied by evaluating the defect theory on a pair of pants. We denote this defect bicategory by $\cT_\cZ$. A reflection structure provides coherent isomorphisms $\cZ(E_{X,Y}) \cong \overline{\cZ(E_{Y,X})}$, which we show assemble into a functor $\dg \colon \cT_\cZ \to \cT_\cZ^{2\op}$ that is the identity on objects and 1-morphisms (the superscript $2\op$ meaning $\dg$ reverses the direction of 2-morphisms). All 1-morphisms have both left and right adjoints, which agree and are given by reversing the orientations of the defect lines; these adjoints are compatible with $\dg$, forming a unitary adjoint functor in the sense of~\cite{Pen20,CHFHS24}.

We call the resulting structure an $\Or(2)$-dagger bicategory, a name suggesting how it should fit into the larger framework of higher dagger categories labelled by subgroups of $\Or(2)$~\cite{FHJF24,Mue25}. We define this structure in detail in Section~\ref{sec:strictO2}; up to nomenclature, most of it is already contained in~\cite{CHFHS24}.  

The main result of Section~\ref{sec:extractionsection} is the following. 
\begin{theorem}
The bicategory of topological defects in a 2-dimensional reflection defect topological field theory forms an $\Or(2)$-dagger bicategory. 
\end{theorem}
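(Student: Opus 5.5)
The plan is to build the $\Or(2)$-dagger structure directly on the defect bicategory $\cT_\cZ$ of the earlier construction (following \cite{DKR11,Car16}). We already know it is a pivotal bicategory, i.e.\ an $\SO(2)$-dagger bicategory. Three pieces of data remain to be supplied and checked against the definition of Section~\ref{sec:strictO2}:
\begin{itemize}
\item the functor $\dg\colon \cT_\cZ\to\cT_\cZ^{2\op}$, which should be the identity on objects and 1-morphisms;
\item the involutivity and compatibility isomorphisms for $\dg$;
\item the compatibility of $\dg$ with the adjunctions given by orientation reversal of defect lines.
\end{itemize}

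\textbf{Step 1: define $\dg$ on 2-morphisms.} The reflection structure gives, for every decorated circle, an antilinear isomorphism $\cZ(\overline{E})\cong\overline{\cZ(E)}$. I will fix an orientation-reversing diffeomorphism $r$ of the circle (reflection across a horizontal axis) that sends the reversed stratified circle $\overline{E_{X,Y}}$ to $E_{Y,X}$, with all line orientations and coorientations matched. Composing $\cZ(r)$ with the reflection isomorphism yields an antilinear map
\[
\dg\colon \cZ(E_{X,Y})\to \cZ(E_{Y,X}).
\]
Involutivity, $\dg\circ\dg=\mathrm{id}$, follows because $r^2$ is isotopic to the identity and the reflection structure is $\bZ_2$-equivariant, so its square is the identity.

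\textbf{Step 2: functoriality.} I must show that $\dg$ reverses vertical composition and preserves horizontal composition. Both compositions are defined by evaluating $\cZ$ on decorated pairs of pants $P$. By naturality of the reflection structure with respect to bordisms, $\dg$ intertwines $\cZ(P)$ with $\cZ(\overline{P})$. The geometric task is then to exhibit an isotopy of stratified bordisms from $\overline{P}$, after conjugating its boundary circles by $r$, to the pair of pants that has the inputs swapped for vertical composition, and unchanged for horizontal composition. Units (cylinders with straight defect lines) are preserved by the same argument. The coherence isomorphisms of the bicategory are also values of $\cZ$ on bordisms, so the same argument shows that $\dg$ respects them.

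\textbf{Step 3: compatibility with adjunctions.} The adjoints $X^\vee$ are obtained by reversing line orientations, and the units and counits are evaluations of $\cZ$ on disks carrying a bent defect line. I need to check that $\dg$ sends the unit of the left adjunction to the counit of the right adjunction, so that the left and right adjoints coincide as a unitary adjoint functor in the sense of \cite{Pen20,CHFHS24}. Reflecting a cap gives a cup, with the orientation of the line preserved and its coorientation flipped, which is exactly the required identity.

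\textbf{Main obstacle.} I expect the main difficulty to be bookkeeping in Step 2: keeping track of orientations and coorientations of the defect lines under $r$, and confirming that each reflected bordism is isotopic to the intended one rather than to a version twisted by a rotation. Rotations would introduce pivotal factors, and these must be absorbed using the pivotal structure. My first approach is to use the strictified presentation of $\cT_\cZ$, in which all pants decompositions of a given planar picture are identified, and to check the relevant pictures case by case.
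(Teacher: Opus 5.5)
Your proposal is correct and follows the paper's proof essentially step for step. The dagger is $\cZ$ of the reflection $z\mapsto\bar z$ of the standard circles composed with $\rho^{-1}$, functoriality is naturality of $\rho$ on the reflected pants (inputs swapped only for vertical composition), and $\dg$ carries the rainbow data $\mathrm{ev}_X,\mathrm{coev}_X$ to the tilde data $\widetilde{\mathrm{coev}}_X,\widetilde{\mathrm{ev}}_X$.

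You leave one step implicit: Step~3 alone only yields $(\varphi^L)^\dg=(\varphi^\dg)^R$. To get the required $(\varphi^\dg)^L=(\varphi^L)^\dg$ you must combine it with the pivotality $\varphi^L=\varphi^R$ you quoted at the outset. As a small correction to Step~3, total orientation reversal preserves coorientations and reverses line orientations, not the other way round.
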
   
 
 The two generators of this $\Or(2)$-structure are the pivotal structure (the $\SO(2)$ part) and the dagger (the reflection), and we \emph{derive} both from the geometry of 2-dimensional defect bordisms and their reflections; their compatibility expresses the fact that a reflection conjugates a rotation to its inverse.

In Section~\ref{sec:positivity} we move on to study the additional structure this $\Or(2)$-dagger bicategory acquires when the defect theory is reflection positive; we essentially find that of a pre-3-Hilbert space, without some of the finiteness assumptions. If one adds these assumptions, topological defects in reflection positive TQFTs form a 3-Hilbert space.
 
 Our results should carry over to topological defects in arbitrary quantum field theories along the lines of~\cite{DKR11}. In the case of a single defect label for 2-dimensional strata, the structure we find is that of a unitary fusion category (up to completeness and finiteness conditions) equipped with a unitary dual functor, recovering the expected categorical structure for unitary categorical symmetries of 2-dimensional quantum field theories.

\paragraph{Statement on AI use.} (The following paragraph was written by Claude without additional modifications or changes by the author)
This paper was written in close collaboration with Claude (Anthropic), primarily the models Claude Opus and Claude Fable, and was in part an experiment in whether current AI systems are useful collaborators in mathematical research. The work was carried out in a series of sessions between June and July 2026, with a persistent project memory carrying the mathematical state of the draft between sessions. The mathematical direction, the key structural decisions, and the central ideas are due to the human author, who also edited the text directly between sessions. Claude contributed drafts of most sections in the author's style, worked out and checked proof details, proposed simplifications and caught gaps (for instance a missing coherence axiom in an early version of the definition of an $\Or(2)$-dagger bicategory), produced all TikZ figures---several of them from hand drawings---performed literature searches with verification of all added references, and maintained the \LaTeX{} source. The collaboration was iterative in both directions: drafts by Claude were corrected by the author, and text by the author was polished by Claude. All definitions, statements, and proofs have been verified by the human author, who takes full responsibility for the content.

\paragraph{Comments by the author.}
Here I record some comments on the process for anyone who might consider doing something similar.
In my opinion, the experiment was successful, for the following reasons. The goal of the project was to combine two well-established approaches in the literature, which are well documented and explained. After understanding these, the project itself isn't that hard. I did have a good understanding of all the mathematics involved prior to the project, which made it easy to spot mistakes and structure the project. In many regards I approached the writing similarly to how one might approach the supervision of a student project. We started by reading and summarizing the necessary background papers. The two essential parts are the Freed--Hopkins definition of reflection positive TQFTs and the definition of defect TQFTs by Carqueville, Runkel, and Schaumann, as well as the procedure to extract defect bicategories in 2 dimensions. Afterwards we drafted a definition of reflection positive defect TQFTs (this required a few iterations and fixing of mistakes). Then we worked towards proving the main theorem, which required many iterations, simplifications of arguments, and fixing of mistakes. Certain things just didn't work, like drawing Figure~\ref{fig:spherical}. After multiple tries I ended up drawing the figure myself.

For now, I will not adopt this approach for other, more involved projects and would warn anyone against using AI in this way for research in areas they aren't an expert in. It can be useful in many other ways, though. I do take full responsibility for mistakes and errors in this paper, of which probably many remain.

\paragraph{Acknowledgments.} 
I am grateful to Nils Carqueville and Luuk Stehouwer for helpful comments on an early draft of the article.

\section{\texorpdfstring{$\Or(2)$}{O(2)}-dagger bicategories}
\label{sec:strictO2}

In this section we give a concrete definition of ($\bC$-anti-linear) $\Or(2)$-dagger bicategories as suggested in~\cite[Example~3.6]{Mue25}\footnote{There the $\bC$-linear version was discussed, because it plays an important role in the description of unoriented topological defects in unoriented quantum field theories. The anti-linear version is what is relevant for unitary defect topological quantum field theories.} and to be studied in a coherent formulation in~\cite{MS25}. The structure consists of a $\bC$-linear bicategory with adjoints equipped with an anti-linear involution reversing the vertical direction of 2-morphisms and a unitary adjoint functor. This is exactly the many-object version of the unitary dual functors of~\cite{Pen20}, as considered in~\cite{CHFHS24}.

Throughout, $\cB$ is a $\bC$-linear bicategory with adjoints; \emph{$\bC$-linear} means that the 2-morphism sets are complex vector spaces and horizontal and vertical composition are bilinear, i.e.\ $\cB$ is enriched in $\bC$-linear categories.

\begin{definition}
\label{def:strictO2}
An \emph{$\Or(2)$-dagger structure} on $\cB$ consists of:
\begin{enumerate}
  \item a functor $\dg \colon \cB \to \cB^{2\op}$ which is the identity on objects and 1-morphisms, anti-linear on 2-morphism spaces, and satisfies $\dg^2 = \mathrm{id}_\cB$;
  \item a choice of \emph{dual functor} $(-)^L \colon \cB \to \cB^{1\op,2\op}$ which is the identity on objects: an assignment of a left adjoint $X^L \colon \beta \to \alpha$ with adjunction data $\mathrm{ev}_X \colon X^L \otimes X \Rightarrow 1_\alpha$ and $\mathrm{coev}_X \colon 1_\beta \Rightarrow X \otimes X^L$ to every 1-morphism $X \colon \alpha \to \beta$, acting on 2-morphisms by left mates,
  \begin{equation}
    \varphi^L := (\mathrm{ev}_Y \otimes \mathrm{id}_{X^L}) \circ (\mathrm{id}_{Y^L} \otimes \varphi \otimes \mathrm{id}_{X^L}) \circ (\mathrm{id}_{Y^L} \otimes \mathrm{coev}_X) \colon Y^L \Rightarrow X^L \, ,
  \end{equation}
  regarded as a 2-functor whose compositor $\nu_{X,\widetilde{X}} \colon X^L \otimes \widetilde{X}^L \Rightarrow (\widetilde{X} \otimes X)^L$ and unitor $\iota_\alpha \colon 1_\alpha \Rightarrow 1_\alpha^L$ are the canonical comparison 2-morphisms built from the adjunction data;
\end{enumerate}
such that $(-)^L$ is a \emph{dagger} 2-functor: it commutes with $\dg$, i.e.\ $(\varphi^\dg)^L = (\varphi^L)^\dg$ for all 2-morphisms $\varphi$, and the coherence 2-isomorphisms $\nu$ and $\iota$ are unitary ($\nu^\dagger=\nu^{-1}$ and $\iota^\dagger=\iota^{-1}$).
\end{definition}

Unpacking datum~(1): $\dg$ reverses vertical composition, $(\Psi \circ \Phi)^\dg = \Phi^\dg \circ \Psi^\dg$, and preserves horizontal composition, $(\widetilde{\Phi} \otimes \Phi)^\dg = \widetilde{\Phi}^\dg \otimes \Phi^\dg$. In particular each hom-category $\cB(\alpha, \beta)$ is an anti-linear dagger category---i.e.\ a $\bC$-linear category equipped with an anti-linear dagger $\dg$. The final condition adds, on top of the adjunctions of datum~(2), that the mate construction is compatible with $\dg$ and that the resulting comparison 2-morphisms are unitary.

\begin{remark}[Strictness]
\label{rem:strictcase}
An $\Or(2)$-dagger structure is \emph{strict} if $1_\alpha^L = 1_\alpha$ and $(\widetilde{X} \otimes X)^L = X^L \otimes \widetilde{X}^L$, and the adjunction data is \emph{normalized and multiplicative},
\begin{equation}
\label{eq:multiplicativity}
  \mathrm{ev}_{1_\alpha} = 1_{1_\alpha} \, ,
  \qquad
  \mathrm{ev}_{\widetilde{X} \otimes X} = \mathrm{ev}_X \circ \big(\mathrm{id}_{X^L} \otimes \mathrm{ev}_{\widetilde{X}} \otimes \mathrm{id}_X\big) \, ,
\end{equation}
and similarly for the coevaluations; this is equivalent to $\iota = \mathrm{id}$ and $\nu = \mathrm{id}$. In general, horizontal composition is only preserved up to the specified coherence isomorphisms: $(\widetilde{\varphi} \otimes \varphi)^L = \nu \circ (\varphi^L \otimes \widetilde{\varphi}^L) \circ \nu^{-1}$, while in the strict case it does so on the nose. For notational simplicity all statements below suppress the coherence isomorphisms; they can always be uniquely inserted.
\end{remark}

\begin{remark}
\label{rem:repackaging}
Definition~\ref{def:strictO2} can be repackaged as follows: the second datum is equivalent to an identity-on-objects dagger 2-functor $(-)^L \colon \cB \to \cB^{1\op,2\op}$ equipped with families $\mathrm{ev}_X, \mathrm{coev}_X$ exhibiting $X^L \dashv X$ which are extranatural with respect to $(-)^L$. Indeed, the extranaturality identities
\begin{equation}
  \mathrm{ev}_Y \circ (\mathrm{id}_{Y^L} \otimes \varphi) = \mathrm{ev}_X \circ (\varphi^L \otimes \mathrm{id}_X) \, ,
  \qquad
  (\varphi \otimes \mathrm{id}_{X^L}) \circ \mathrm{coev}_X = (\mathrm{id}_Y \otimes \varphi^L) \circ \mathrm{coev}_Y
\end{equation}
for $\varphi \colon X \Rightarrow Y$ each determine $\varphi^L$ uniquely via the Zorro moves, and are satisfied precisely when the functor acts by left mates. The adjunction data cannot be dispensed with, however: a bare identity-on-objects dagger 2-functor $\cB \to \cB^{1\op,2\op}$ bears no relation to duality, and even among mate-functors the choice of $\mathrm{ev}, \mathrm{coev}$ is genuine structure---by~\cite{Pen20}, in the one-object case, inequivalent unitary dual functors differ exactly by positive rescalings of this data.
\end{remark}

The first property we need is that the dagger turns left adjunction data into right adjunction data.

\begin{lemma}
\label{lem:abstractunitaryduality}
Let $\cB$ carry an $\Or(2)$-dagger structure. Then for every 1-morphism $X \colon \alpha \to \beta$ the 2-morphisms
\begin{equation}
  \widetilde{\mathrm{ev}}_X := (\mathrm{coev}_X)^\dg \colon X \otimes X^L \Longrightarrow 1_\beta \, ,
  \qquad
  \widetilde{\mathrm{coev}}_X := (\mathrm{ev}_X)^\dg \colon 1_\alpha \Longrightarrow X^L \otimes X
\end{equation}
exhibit $X^L$ also as a \emph{right} adjoint of $X$. In particular every 1-morphism has a two-sided adjoint.
\end{lemma}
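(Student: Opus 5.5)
The plan is to apply $\dg$ directly to the two Zorro (snake) identities for the left adjunction $X^L \dashv X$ and read off the Zorro identities for the putative adjunction $X \dashv X^L$. The only properties of $\dg$ needed are those from datum~(1): it is the identity on objects and 1-morphisms, it reverses vertical composition, it preserves horizontal composition, and it sends identity 2-morphisms to identities (as any functor does). The compatibility of $(-)^L$ with $\dg$ and the unitarity of $\nu,\iota$ are not needed for this lemma.

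\textbf{Step 1: typing.} Since $\mathrm{coev}_X \colon 1_\beta \Rightarrow X \otimes X^L$ and $\dg$ is the identity on 1-morphisms but reverses 2-morphisms, we get $\widetilde{\mathrm{ev}}_X = \mathrm{coev}_X^\dg \colon X \otimes X^L \Rightarrow 1_\beta$. Likewise $\widetilde{\mathrm{coev}}_X = \mathrm{ev}_X^\dg \colon 1_\alpha \Rightarrow X^L \otimes X$. These are exactly the shapes of counit and unit for an adjunction with $X$ as left adjoint and $X^L$ as right adjoint.

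\textbf{Step 2: the Zorro identities.} The left adjunction data satisfy, suppressing unitors and associators,
\begin{equation*}
(\mathrm{id}_X \otimes \mathrm{ev}_X) \circ (\mathrm{coev}_X \otimes \mathrm{id}_X) = \mathrm{id}_X, \qquad (\mathrm{ev}_X \otimes \mathrm{id}_{X^L}) \circ (\mathrm{id}_{X^L} \otimes \mathrm{coev}_X) = \mathrm{id}_{X^L}.
\end{equation*}
Applying $\dg$ to the first identity, reversing the order of vertical composition and distributing over $\otimes$, gives
\begin{equation*}
(\widetilde{\mathrm{ev}}_X \otimes \mathrm{id}_X) \circ (\mathrm{id}_X \otimes \widetilde{\mathrm{coev}}_X) = \mathrm{id}_X.
\end{equation*}
This is the first Zorro identity for $X \dashv X^L$. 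In the same way, applying $\dg$ to the second identity gives
\begin{equation*}
(\mathrm{id}_{X^L} \otimes \widetilde{\mathrm{ev}}_X) \circ (\widetilde{\mathrm{coev}}_X \otimes \mathrm{id}_{X^L}) = \mathrm{id}_{X^L},
\end{equation*}
which is the second Zorro identity. Hence $X^L$ is also a right adjoint of $X$, and together with datum~(2) every 1-morphism has a two-sided adjoint.

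\textbf{Main obstacle.} The only delicate point is the coherence isomorphisms of the bicategory, which the paper suppresses. One must check that $\dg$ carries associators and unitors to the inverses of the corresponding structure 2-morphisms. This is part of what it means for $\dg \colon \cB \to \cB^{2\op}$ to be a functor (2-functor) that is the identity on 1-morphisms. If this is not automatic in the chosen formalism, I would first reduce to a strict 2-category via coherence, so the computation above holds verbatim.
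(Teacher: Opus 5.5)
Your proposal is correct and is essentially the paper's proof. Both apply $\dg$ to the two Zorro identities for $(\mathrm{ev}_X,\mathrm{coev}_X)$, using only that $\dg$ is the identity on 1-morphisms, reverses vertical composition and preserves horizontal composition; your daggered identities are right, and the paper writes out only the one for $\mathrm{id}_{X^L}$. Your remark that $\dg$ must send associators and unitors to their inverses is the coherence point the paper suppresses, and it is built into $\dg$ being a functor $\cB \to \cB^{2\op}$ that is the identity on 1-morphisms.
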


\begin{proof}
Apply $\dg$ to the two Zorro identities for $(\mathrm{ev}_X, \mathrm{coev}_X)$. Since $\dg$ is the identity on 1-morphisms, reverses vertical and preserves horizontal composition, the dagger of, e.g., $(\mathrm{ev}_X \otimes \mathrm{id}_{X^L}) \circ (\mathrm{id}_{X^L} \otimes \mathrm{coev}_X) = 1_{X^L}$ is $(\mathrm{id}_{X^L} \otimes \widetilde{\mathrm{ev}}_X) \circ (\widetilde{\mathrm{coev}}_X \otimes \mathrm{id}_{X^L}) = 1_{X^L}$, which is a Zorro identity for the adjunction $X \dashv X^L$ with unit $\widetilde{\mathrm{coev}}_X$ and counit $\widetilde{\mathrm{ev}}_X$.
\end{proof}

With both adjunctions in hand, every 2-endomorphism has a right and a left trace.

\begin{definition}
\label{def:traces}
For a 2-endomorphism $f \colon X \Rightarrow X$ of a 1-morphism $X \colon \alpha \to \beta$, the \emph{right} and \emph{left traces} $\operatorname{tr}^R(f) \in \operatorname{End}(1_\alpha)$ and $\operatorname{tr}^L(f) \in \operatorname{End}(1_\beta)$ are
\begin{align}
  \operatorname{tr}^R(f) &:= \mathrm{ev}_X \circ (\mathrm{id}_{X^L} \otimes f) \circ \widetilde{\mathrm{coev}}_X \, , \\
  \operatorname{tr}^L(f) &:= \widetilde{\mathrm{ev}}_X \circ (f \otimes \mathrm{id}_{X^L}) \circ \mathrm{coev}_X \, ,
\end{align}
formed from the left adjunction data $\mathrm{ev}_X, \mathrm{coev}_X$ and its dagger $\widetilde{\mathrm{ev}}_X, \widetilde{\mathrm{coev}}_X$ of Lemma~\ref{lem:abstractunitaryduality}.
\end{definition}

A \emph{pivotal structure} on a bicategory $\cB$ with adjoints is a coherent identification of left and right adjoints. A convenient formulation is as a pseudonatural transformation $\delta \colon \mathrm{id} \Rightarrow (-)^{LL}$ whose object components are identity 1-morphisms; in the language of higher dagger categories this is an $\SO(2)$-dagger structure. We call such a structure \emph{unitary} if all 2-morphism components $\delta_X$ are unitary. The identification of left and right adjoints from Lemma~\ref{lem:abstractunitaryduality} provides exactly this:
\begin{proposition}
\label{prop:abstractpivotal}
An $\Or(2)$-dagger structure induces a canonical unitary pivotal structure on $\cB$: the 2-morphisms
\begin{equation}
  \delta_X := (\widetilde{\mathrm{ev}}_X \otimes \mathrm{id}_{X^{LL}}) \circ (\mathrm{id}_X \otimes \mathrm{coev}_{X^L}) \colon X \Longrightarrow X^{LL}
\end{equation}
are unitary (hence invertible) and assemble into a natural transformation $\mathrm{id}_\cB \Rightarrow (-)^{LL}$.
\end{proposition}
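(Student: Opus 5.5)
The plan is to read $\delta_X$ as the canonical comparison isomorphism between two left adjoints of the same 1-morphism $X^L$. Lemma~\ref{lem:abstractunitaryduality} says that $(\widetilde{\mathrm{ev}}_X, \widetilde{\mathrm{coev}}_X)$ exhibit $X \dashv X^L$, so $X$ is a left adjoint of $X^L$. On the other hand, $(\mathrm{ev}_{X^L}, \mathrm{coev}_{X^L})$ exhibit $X^{LL}$ as a left adjoint of $X^L$. The formula for $\delta_X$ is exactly the standard comparison 2-morphism $X \Rightarrow X^{LL}$ built from the counit of the first adjunction and the unit of the second. The comparison in the opposite direction is
\begin{equation}
  \delta'_X := (\mathrm{ev}_{X^L} \otimes \mathrm{id}_X) \circ (\mathrm{id}_{X^{LL}} \otimes \widetilde{\mathrm{coev}}_X) \colon X^{LL} \Longrightarrow X .
\end{equation}
The usual Zorro-move argument, i.e.\ uniqueness of adjoints up to unique compatible isomorphism, gives $\delta'_X \circ \delta_X = 1_X$ and $\delta_X \circ \delta'_X = 1_{X^{LL}}$. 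This is routine string-diagram manipulation using the four Zorro identities.

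For unitarity I will compute $\delta_X^\dg$ directly. Since $\dg$ is the identity on 1-morphisms, reverses vertical composition, preserves horizontal composition and is involutive, we get
\begin{equation}
  \delta_X^\dg = (\mathrm{id}_X \otimes \mathrm{coev}_{X^L}^\dg) \circ (\widetilde{\mathrm{ev}}_X^\dg \otimes \mathrm{id}_{X^{LL}}) = (\mathrm{id}_X \otimes \widetilde{\mathrm{ev}}_{X^L}) \circ (\mathrm{coev}_X \otimes \mathrm{id}_{X^{LL}}) .
\end{equation}
I then need to identify this with $\delta'_X$. The key observation is that $\delta^\dg_X$ is again a comparison between two adjoints. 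Namely, $X^{LL}$ is a right adjoint of $X^L$ via Lemma~\ref{lem:abstractunitaryduality} applied to $X^L$, and $X$ is a right adjoint of $X^L$ via $(\mathrm{ev}_X, \mathrm{coev}_X)$. Hence $\delta^\dg_X$ is also an inverse of $\delta_X$, by the same Zorro argument: compose $\delta_X^\dg\circ\delta_X$ and straighten the zigzags. Uniqueness of inverses then gives $\delta_X^\dg = \delta_X^{-1}$. Concretely, I will first check $\delta_X^\dg \circ \delta_X = 1_X$ by straightening the $X^L$-strand with one Zorro identity for $X^L$ and then one for $X$. If the straightening does not close up directly, I will instead show $\delta^\dg_X = \delta'_X$ by expanding both sides and sliding cups and caps.

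Naturality is the step I expect to be the main obstacle, because it is exactly where the dagger compatibility $(\varphi^\dg)^L = (\varphi^L)^\dg$ is needed. For $\varphi \colon X \Rightarrow Y$ I must show $\varphi^{LL} \circ \delta_X = \delta_Y \circ \varphi$. My approach is to slide $\varphi^{LL}$ through the cup $\mathrm{coev}_{X^L}$ using the coevaluation extranaturality identity of Remark~\ref{rem:repackaging} for the 2-morphism $\varphi^L \colon Y^L \Rightarrow X^L$. This turns it into $\varphi^L$ on the middle strand, which I then need to slide through the cap $\widetilde{\mathrm{ev}}_X = \mathrm{coev}_X^\dg$. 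For that I take the coevaluation extranaturality for $\varphi^\dg \colon Y \Rightarrow X$ and apply $\dg$. This yields $\widetilde{\mathrm{ev}}_Y \circ (\varphi \otimes \mathrm{id}_{Y^L}) = \widetilde{\mathrm{ev}}_X \circ (\mathrm{id}_X \otimes ((\varphi^\dg)^L)^\dg)$, and $((\varphi^\dg)^L)^\dg = \varphi^L$ precisely by the dagger-functor condition. Combining the two slides gives naturality. Finally, compatibility of $\delta$ with horizontal composition (pseudonaturality with identity object components) follows in the strict case from the multiplicativity \eqref{eq:multiplicativity} of ev and coev and its dagger. In general it follows after inserting the unitary $\nu,\iota$, whose unitarity guarantees that the daggered adjunction data are multiplicative as well.
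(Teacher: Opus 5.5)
Your invertibility argument (comparing two left adjoints of $X^L$) is the paper's. Your naturality computation is correct: sliding $\varphi^L$ through $\widetilde{\mathrm{ev}}_X$ via the daggered coevaluation extranaturality is exactly where $(\varphi^\dg)^L=(\varphi^L)^\dg$ enters.

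The gap is unitarity. Knowing that $\delta_X^\dg$ compares two right adjoints of $X^L$ only shows that it is invertible, with inverse $(\mathrm{id}_{X^{LL}}\otimes\mathrm{ev}_X)\circ(\widetilde{\mathrm{coev}}_{X^L}\otimes\mathrm{id}_X)=(\delta'_X)^\dg$. That is just the dagger of $\delta'_X\circ\delta_X=1_X$, and it says nothing about $\delta_X^\dg\circ\delta_X$. In that composite no zigzag straightens:
\begin{itemize}
\item the cap $\widetilde{\mathrm{ev}}_X$ is the counit of $X\dashv X^L$, whose unit is $\widetilde{\mathrm{coev}}_X$, not $\mathrm{coev}_X$;
\item the cup $\mathrm{coev}_{X^L}$ pairs with $\mathrm{ev}_{X^L}$, not with $\widetilde{\mathrm{ev}}_{X^L}$.
\end{itemize}
Your fallback of sliding cups and caps fails for the same reason. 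Unitarity does not follow from Zorro moves, extranaturality and dagger-commutation of mates; it needs the unitarity of $\nu$ and $\iota$, which you never use at this step.

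Here is a counterexample without that hypothesis. Take one object, 1-morphisms $g\in\bZ$ with $g\otimes h=g+h$, and $\operatorname{End}(g)=\bC$ with $\dg$ given by complex conjugation. Set $g^L=-g$, with scalar data $\mathrm{ev}_g=a_g$, $\mathrm{coev}_g=b_g$ and $a_gb_g=1$. Mates of scalars are the scalars themselves, so $(-)^L$ commutes with $\dg$ for every choice. Yet $\delta_g=\overline{b_g}\,b_{-g}$, and choosing $b_1=2$ with all other $b_g=1$ gives $\delta_1=2$. What fails is precisely unitarity of the compositor $\nu$, for instance on the pair $1,-1$.

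The missing step is to apply dagger-commutation to the evaluation itself. In the strict case the mate formula, with $\mathrm{ev}_{1_\alpha}=1$, multiplicativity of $\mathrm{coev}$ and one Zorro move, gives $\mathrm{ev}_X^L=\mathrm{coev}_{X^L}$. Similarly, $\mathrm{coev}_{1_\alpha}=1$ and multiplicativity of $\mathrm{ev}$ give $(\widetilde{\mathrm{coev}}_X)^L=\mathrm{ev}_X\circ(\mathrm{id}_{X^L}\otimes\delta'_X)$. Hence $\widetilde{\mathrm{ev}}_{X^L}=(\mathrm{ev}_X^L)^\dg=(\mathrm{ev}_X^\dg)^L=\mathrm{ev}_X\circ(\mathrm{id}_{X^L}\otimes\delta'_X)$.

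Substitute this into your formula $\delta_X^\dg=(\mathrm{id}_X\otimes\widetilde{\mathrm{ev}}_{X^L})\circ(\mathrm{coev}_X\otimes\mathrm{id}_{X^{LL}})$. The interchange law and the Zorro identity $(\mathrm{id}_X\otimes\mathrm{ev}_X)\circ(\mathrm{coev}_X\otimes\mathrm{id}_X)=1_X$ then give $\delta_X^\dg=\delta'_X=\delta_X^{-1}$. In the non-strict case the same computation carries compositor and unitor factors on both sides. After daggering they cancel exactly because $\nu^\dg=\nu^{-1}$ and $\iota^\dg=\iota^{-1}$.

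This is the argument behind Penneys' result, which the paper simply cites (hom-category-wise) for unitarity. With this step inserted, your proposal becomes a complete, self-contained proof.
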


\begin{proof}
By Lemma~\ref{lem:abstractunitaryduality}, both $X$ (via the tilde data) and $X^{LL}$ (via the chosen data for $X^L$) are left adjoints of $X^L$; $\delta_X$ is the canonical comparison 2-morphism between them, invertible by uniqueness of adjoints, and naturality and monoidality follow from functoriality of $(-)^L$ and $\dg$. Unitarity $\delta_X^\dg = \delta_X^{-1}$ holds as well: in the one-object case this is~\cite[Prop.~3.9, Cor.~3.10]{Pen20} (see also~\cite[Thm.~2.8.1]{Ste23}), which shows that for a unitary dual functor $(-)^L$ on a unitary multitensor category the morphism $\delta_c = (\widetilde{\mathrm{ev}}_c \otimes \mathrm{id}_{c^{LL}}) \circ (\mathrm{id}_c \otimes \mathrm{coev}_{c^L})$ is automatically unitary; the same argument applies hom-category-wise in the bicategorical setting.
\end{proof}

Besides the dagger $\dg \colon \cB \to \cB^{2\op}$, which is part of the definition, an $\Or(2)$-dagger structure gives rise to a second dagger operation reversing the \emph{horizontal} direction. In the terminology of~\cite{FHJF24,CL25,Mue25,FMPS26}, the two reflections generate the subgroup $\bZ_2^b \times \bZ_2^t \leq \Or(2)$: the given dagger is $\dg^t$, reversing the direction of 2-morphisms (`$t$ for top'), and the derived one is $\dg^b$, reversing the direction of 1-morphisms (`$b$ for bottom').

\begin{lemma}
\label{lem:conjugation}
Define $(-)^* = \dg^b \colon \cB \to \cB^{1\op}$ to be the identity on objects, $X^* := X^L$ on 1-morphisms, and
\begin{equation}
  \varphi^* := (\varphi^L)^\dg = (\varphi^\dg)^L \colon X^* \Longrightarrow Y^*
  \qquad \text{for } \varphi \colon X \Rightarrow Y \, ,
\end{equation}
which is well defined by the commutation axiom of Definition~\ref{def:strictO2}. Then:
\begin{enumerate}
  \item $(-)^*$ is anti-linear, \emph{covariant} for vertical composition, $(\psi \circ \varphi)^* = \psi^* \circ \varphi^*$, and contravariant for horizontal composition, $(\widetilde{X} \otimes X)^* = X^* \otimes \widetilde{X}^*$ and $(\widetilde{\varphi} \otimes \varphi)^* = \varphi^* \otimes \widetilde{\varphi}^*$; that is, $(-)^*$ reverses only the direction of 1-morphisms.
  \item $\dg \circ (-)^* = (-)^* \circ \dg = (-)^L$.
  \item $((-)^*)^2 = (-)^{LL}$, so $(-)^*$ is involutive up to the canonical pivotal structure of Proposition~\ref{prop:abstractpivotal}: $\varphi^{**} = \delta_Y \circ \varphi \circ \delta_X^{-1}$ for $\varphi \colon X \Rightarrow Y$, and $X^{**} = X^{LL} \cong X$ via $\delta_X$.
\end{enumerate}
\end{lemma}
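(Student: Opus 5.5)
The proof is a direct check of each item from the definition. Everything follows from three facts in Definition~\ref{def:strictO2}: $\dg$ is an anti-linear functor $\cB \to \cB^{2\op}$, $(-)^L$ is a (linear) 2-functor $\cB \to \cB^{1\op,2\op}$, and the two commute. Coherence isomorphisms are suppressed as in Remark~\ref{rem:strictcase}. Well-definedness of $\varphi^*$ is just the commutation axiom $(\varphi^\dg)^L = (\varphi^L)^\dg$.

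\emph{Item (1).} Each property comes from combining the variance of the two functors.
\begin{itemize}
\item $(-)^L$ is linear and $\dg$ is anti-linear, so $(-)^*$ is anti-linear.
\item For vertical composition, $(-)^L$ reverses it because it lands in $\cB^{2\op}$ on 2-cells: $(\psi\circ\varphi)^L = \varphi^L \circ \psi^L$. This holds for mates by the Zorro identities, or simply because $(-)^L$ is stated to be a 2-functor to $\cB^{1\op,2\op}$. Then $\dg$ reverses once more, giving $(\psi\circ\varphi)^* = (\varphi^L\circ\psi^L)^\dg = \psi^*\circ\varphi^*$.
\item For horizontal composition, $(-)^L$ reverses the order, $(\widetilde\varphi\otimes\varphi)^L = \varphi^L\otimes\widetilde\varphi^L$ (up to $\nu$), while $\dg$ preserves it. Hence $(\widetilde\varphi\otimes\varphi)^* = \varphi^*\otimes\widetilde\varphi^*$.
\item On 1-morphisms $\dg$ is the identity, so $(\widetilde X\otimes X)^* = (\widetilde X\otimes X)^L \cong X^L\otimes\widetilde X^L$.
\end{itemize}
In the non-strict case one must also check that the comparison $\nu$ is compatible with $\dg$. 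This is exactly where unitarity of $\nu$ and $\iota$ is used: it makes the coherence data of $(-)^*$, namely $\nu$ viewed through $\dg$, coincide with $\nu$ rather than with $(\nu^{-1})^\dg$.

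\emph{Item (2).} This is immediate from $\dg^2=\mathrm{id}$: $\dg((\varphi^L)^\dg) = \varphi^L$, and likewise $((\varphi^\dg)^L)^\dg$ reduces to $\varphi^L$ after using commutation.

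\emph{Item (3).} On 1-morphisms, $X^{**} = X^{LL}$ by definition. On 2-morphisms, $\varphi^{**} = (((\varphi^L)^\dg)^L)^\dg$. Commuting $\dg$ past $(-)^L$ and using $\dg^2=\mathrm{id}$ gives $\varphi^{**} = \varphi^{LL}$. It remains to show $\varphi^{LL} = \delta_Y\circ\varphi\circ\delta_X^{-1}$, i.e.\ naturality of $\delta$ from Proposition~\ref{prop:abstractpivotal}. I would verify this directly using the extranaturality identities of Remark~\ref{rem:repackaging}. First apply them to $\varphi^L$ with respect to $\mathrm{coev}_{X^L}$. Then use the daggered extranaturality for $\widetilde{\mathrm{ev}}$, obtained by applying $\dg$ to the $\mathrm{coev}$ identity for $\varphi^\dg$, together with $(\varphi^\dg)^L = (\varphi^L)^\dg$. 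Together these slide $\varphi$ through the composite defining $\delta$.

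\emph{Main obstacle.} The step I expect to be most delicate is this last naturality computation. One must keep track of which adjunction is used: the tilde data is the right adjunction for $X$, while $\mathrm{coev}_{X^L}$ is the left adjunction for $X^L$. The daggered extranaturality is needed precisely there, and this is the one place the commutation axiom enters nontrivially. The rest is bookkeeping.
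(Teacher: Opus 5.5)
Your proposal is correct and follows the paper's proof essentially verbatim. For (1), the variances come from $(-)^L$ reversing both compositions while $\dg$ reverses the vertical one again. Item (2) is $\dg^2=\mathrm{id}$ together with the commutation axiom. Item (3) reduces $\varphi^{**}$ to $\varphi^{LL}$ and then invokes naturality of $\delta$.

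Your extras are also correct. The direct extranaturality check of that naturality, which the paper simply takes from Proposition~\ref{prop:abstractpivotal}, works as you describe. So does your observation that unitarity of $\nu$ makes the compositor $(\nu^{-1})^\dg$ of $(-)^*$ equal to $\nu$.
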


\begin{proof}
Direct computation from the axioms. The variances in (1) follow because $(-)^L$ reverses both compositions and $\dg$ reverses the vertical one again, e.g.\ $(\psi \circ \varphi)^* = ((\psi \circ \varphi)^L)^\dg = (\varphi^L \circ \psi^L)^\dg = \psi^* \circ \varphi^*$. For (2): $(\varphi^*)^\dg = ((\varphi^L)^\dg)^\dg = \varphi^L$ and $(\varphi^\dg)^* = ((\varphi^\dg)^\dg)^L = \varphi^L$, using $\dg^2 = \mathrm{id}$. For (3): $\varphi^{**} = (((\varphi^L)^\dg)^L)^\dg = (((\varphi^L)^L)^\dg)^\dg = \varphi^{LL}$, and the identification with $\delta$-conjugation is the naturality of $\delta$.
\end{proof}

\begin{remark}
\label{rem:O2reflections}
All these structures have natural interpretations in terms of the topological group $\Or(2)$:
\begin{itemize}
  \item $\dg \colon \cB \to \cB^{2\op}$ is the reflection across the \emph{horizontal} axis: it fixes objects and 1-morphisms and reverses the vertical direction, with $\dg^2 = \mathrm{id}$ strictly. It makes $\cB$ a $\bZ_2^t$-dagger bicategory;
  \item $(-)^* \colon \cB \to \cB^{1\op}$ is the reflection across the \emph{vertical} axis: it reverses the horizontal direction only, with $((-)^*)^2 = (-)^{LL}$ trivialized by $\delta$. It makes $\cB$ a $\bZ_2^b$-dagger bicategory;
  \item The composite $(-)^L = \dg \circ (-)^* \colon \cB \to \cB^{1\op,2\op}$ is the rotation by $\pi$; the trivialization $(-)^{LL} \cong \operatorname{id}$ of its square through the canonical pivotal structure of Proposition~\ref{prop:abstractpivotal} is the path corresponding to a full rotation. It is the extra structure of an $\SO(2)$-dagger bicategory.
\end{itemize}
\end{remark} 

\begin{remark}
\label{rem:unitarydualfunctors}
For $\cB$ with a single object $\alpha$, whose endomorphism category $\cC = \cB(\alpha,\alpha)$ is a unitary multitensor category (datum~(1) providing the dagger), Definition~\ref{def:strictO2} is verbatim the datum of a \emph{unitary dual functor} in the sense of~\cite{Pen20}: a dual functor $(-)^L$ on $\cC$ which is a dagger tensor functor with unitary coherence data. In the multi-object case (genuine bicategory), datum~(2) alone is the \emph{unitary adjoint functor} (UAF) of~\cite{CHFHS24}: a choice of adjunction data for each 1-morphism such that the induced left-mate functor $\cB \to \cB^{1\op,2\op}$ is dagger and its canonical tensorators are unitary. Note that unitarity of the tensorator is a genuine condition on the adjunction data and does \emph{not} follow from the commutation of $(-)^L$ with $\dg$ alone. Lemmas~\ref{lem:abstractunitaryduality} and~\ref{lem:conjugation} then reproduce the associated bi-involutive structure of~\cite{HP17}, with conjugation $\overline{X} = X^*$ and the daggered duality maps, and Proposition~\ref{prop:abstractpivotal} the canonical pivotal structure. The main theorem of~\cite{Pen20} classifies unitary dual functors on a fixed unitary multitensor category: they form a torsor over the group of groupoid homomorphisms from the universal grading groupoid $\mathcal{U}(\cC)$ to $\mathbb{R}_{>0}$, all inducing unitarily equivalent bi-involutive structures. For Definition~\ref{def:strictO2} this means: for a fixed dagger structure the choice of $(-)^L$ is genuinely non-unique, with the ambiguity controlled by positive rescalings of the duality data, so an $\Or(2)$-dagger structure is more data than a dagger structure---but only mildly so, and not at the level of the induced bi-involutive structure. 
\end{remark}

\section{Reflection positive defect TQFTs}
\label{sec:proposal}

In this section we define a dagger structure on the defect bordism category---equivalently, in its coherent reformulation, the $\bZ_2$-action of reflections on it together with preferred hermitian structures on all objects. We work in arbitrary dimension $n$ before specializing to $n = 1, 2$ in Section~\ref{sec:extractionsection}. We suppress standard details concerning collars and smooth structures on stratified bordisms. Throughout we follow~\cite{CRS18} for the stratified geometry and~\cite[\S 5]{Ste23} for the collar and dagger conventions in the closed case, which carry over unchanged.

We review the general definition of label sets for defects from~\cite[\S 2]{CRS18}, which we follow. A set of \emph{$n$-dimensional defect data}
\begin{equation}
  \bD = \big(D_n, D_{n-1}, \dots, D_0;\ f_{n-1}, f_{n-2}, \dots, f_0\big)
\end{equation}
consists of label sets $D_j$ for the $j$-dimensional strata for \emph{all} $j \in \{0, 1, \dots, n\}$, together with adjacency maps
\begin{equation}
  f_j \colon D_j \times \{\pm\} \longrightarrow \big[\mathrm{Sphere}^{\textrm{def}}_{n-j-1}(\partial^{j+1}\bD)\big]
\end{equation}
assigning to each label the isomorphism class of decorated stratified sphere constituting the link of the stratum. The definition is inductive in $n$~\cite[Definitions~2.1 and~2.4]{CRS18}: the allowed local configurations are iterated cones and cylinders over lower-dimensional configurations, the link spheres are endomorphisms of $\varnothing$ in the lower-dimensional decorated bordism category, and $\partial\bD$ denotes the $(n-1)$-dimensional defect data obtained by forgetting $D_0$ and $f_0$. The adjacency maps satisfy the duality condition
\begin{equation}
\label{eq:CRSduality}
  f_j(\varphi, -) = f_j(\varphi, +)^{\mathrm{rev}} \, ,
\end{equation}
where $(-)^{\mathrm{rev}}$ reverses the orientations of \emph{all} strata~\cite[eq.~(2.21)]{CRS18}. The symmetric monoidal category $\Bord_n^{\textrm{def}}(\bD)$ has as objects closed stratified oriented $(n-1)$-manifolds decorated by $\partial\bD$, and as morphisms compact decorated stratified oriented bordisms---the strata of all dimensions are oriented and labelled, with links prescribed by the $f_j$---taken up to decoration-preserving isomorphism relative to the boundary. An $n$-dimensional \emph{defect TQFT} is a symmetric monoidal functor $\cZ \colon \Bord_n^{\textrm{def}}(\bD) \to \Vect_\bC$. As usual, other symmetric monoidal target categories (such as super vector spaces) may also be considered. 

For $n = 2$ we spell this out in slightly more detail, fixing the conventions used in the proofs later on. Following~\cite{Car16} we take $D_0 = \varnothing$, i.e.\ there are no interior 0-strata; by the $D_0$-completion of~\cite[\S 2.4]{CRS18} this is no loss of generality, as point defect labels can be identified with states computed by the TQFT---which is exactly how the 2-morphism spaces of $\cT_\cZ$ arise in Section~\ref{sec:defectTQFT}. A set of 2-dimensional defect data with $D_0 = \varnothing$ then amounts to a tuple $\bD = (D_2, D_1, s, t)$ consisting of a set $D_2$ of bulk labels, a set $D_1$ of line defect labels, and source and target maps $s,t \colon D_1 \to D_2$: the link sphere of a 1-stratum is a 0-sphere whose two points are decorated by the adjacent bulk labels, so $f_1(X,+)$ encodes the pair $(s(X), t(X))$, and the duality condition \eqref{eq:CRSduality} determines $f_1(X,-)$.

We fix the following orientation conventions, illustrated in Figure~\ref{fig:coorientation}. All 2-strata carry the ambient orientation, and all 1-strata are oriented. Let $\ell$ be a 1-stratum, $p \in \ell$, and let $v$ be a positively oriented tangent vector of $\ell$ at $p$. The \emph{coorientation} of $\ell$ at $p$ is the unique normal direction $\nu$ such that $(v, \nu)$ is a positively oriented frame for the ambient orientation. In the standard plane with its counterclockwise orientation and $\ell$ oriented upwards, $\nu$ points to the left. The coorientation distinguishes the two adjacent 2-strata: we call the one which $\nu$ points out of the \emph{source} and the one which $\nu$ points into the \emph{target}. Concretely, the source lies to the right of $\ell$ and the target to its left when traversing $\ell$ along its orientation. A 1-stratum labelled by $X \in D_1$ is then required to have its source 2-stratum labelled by $s(X)$ and its target 2-stratum by $t(X)$. Note that $\nu$ is unchanged if both the ambient orientation and the orientation of $\ell$ are reversed, since reversing the ambient orientation and reversing $v$ each flip $\nu$, and the two reversals cancel; this elementary observation underlies the well-definedness of total orientation reversal in Section~\ref{sec:involutivedata}.

\begin{figure}[ht]
\centering
\begin{tikzpicture}[scale=1.1]
  \draw[dashed, rounded corners] (-2.6,-1.6) rectangle (2.6,1.6);
  \draw[very thick] (0,-1.6) -- (0,1.6);
  \draw[very thick,->] (0,0.9) -- (0,1.1);
  \node[above] at (0,1.65) {$X$};
  \fill (0,0) circle (1.5pt);
  \node[below right] at (0.05,-0.05) {$p$};
  \draw[->,thick] (0.03,0) -- (0.03,0.65); \node[right] at (0.06,0.5) {$v$};
  \draw[->,thick] (0,0.03) -- (-0.65,0.03); \node[above] at (-0.5,0.08) {$\nu$};
  \node at (-1.7,1.0) {$t(X)$};
  \node at (1.7,1.0) {$s(X)$};
  \draw[->] (2.05,-1.05) arc (-30:250:0.28);
\end{tikzpicture}
\caption{Coorientation convention for interior 1-strata. The 1-stratum $\ell$ is oriented by $v$; the ambient orientation (counterclockwise, indicated in the corner) rotates $v$ by $+\tfrac{\pi}{2}$ into the normal $\nu$, so that $(v,\nu)$ is a positively oriented frame. The 2-stratum which $\nu$ points out of (here: to the right of $\ell$) is the source, the one it points into is the target.}
\label{fig:coorientation}
\end{figure}

The symmetric monoidal category $\Bord_2^{\textrm{def}}(\bD)$ has as objects closed oriented 1-manifolds $E$ with finitely many marked points, each marked point labelled by a pair $(X,\varepsilon) \in D_1 \times \{\pm\}$ recording the defect line which will cross transversally and the sign of the crossing, and complementary intervals labelled by $D_2$, subject to the matching conditions imposed by $s$ and $t$. Morphisms are (equivalence classes of) compact oriented stratified bordisms with oriented 1-strata labelled by $D_1$, 2-strata labelled by $D_2$, compatibly with the conventions above and with the boundary decorations; there are no interior 0-strata.

We explain the boundary conventions for cobordisms in more detail as they will be important for the computations below. We model an object $E$ by the germ of a decorated cylinder $E \times (-1,1)$ with ambient orientation $\mathrm{or}(E) \times (\uparrow)$, where the 1-stratum through a marked point $p$ is oriented upward if $\varepsilon_p = +$ and downward if $\varepsilon_p = -$. The matching conditions then read: for $\varepsilon_p = +$ the interval to the right of $p$ (with respect to the orientation of $E$) is labelled $s(X_p)$ and the interval to the left $t(X_p)$; for $\varepsilon_p = -$ the two are exchanged. Incoming boundary components of a bordism are read with the inward pointing normal as positive vertical direction, outgoing components with the outward pointing normal; with these conventions the cylinder $E \times [0,1]$ with product decoration is the identity on $E$. The conventions are illustrated in Figure~\ref{fig:conventions}.

\begin{figure}[ht]
\centering
\begin{tikzpicture}[baseline, scale=0.9]
  \draw[->] (-1.8,0) -- (1.8,0);
  \node[right] at (1.8,0) {$E$};
  \draw[very thick, ->] (0,-1.2) -- (0,1.2);
  \node[above] at (0,1.2) {$X$};
  \node at (-1.1,0.7) {$t(X)$};
  \node at (1.1,0.7) {$s(X)$};
  \fill (0,0) circle (1.6pt);
  \node[below right] at (0.05,-0.1) {$(X,+)$};
\end{tikzpicture}
\hspace{1.8cm}
\begin{tikzpicture}[baseline, scale=0.9]
  \draw[->] (-1.8,0) -- (1.8,0);
  \node[right] at (1.8,0) {$E$};
  \draw[very thick, <-] (0,-1.2) -- (0,1.2);
  \node[above] at (0,1.2) {$X$};
  \node at (-1.1,0.7) {$s(X)$};
  \node at (1.1,0.7) {$t(X)$};
  \fill (0,0) circle (1.6pt);
  \node[below right] at (0.05,-0.1) {$(X,-)$};
\end{tikzpicture}
\caption{Germ presentation of a marked point on an object $E$ (drawn locally; $E$ oriented to the right, vertical direction $\uparrow$, ambient orientation $\mathrm{or}(E) \times (\uparrow)$). For $\varepsilon_p = +$ the defect line is oriented upward and $s(X)$ sits to the right of $p$; for $\varepsilon_p = -$ both are reversed.}
\label{fig:conventions}
\end{figure}

\subsection{The reflection action}
\label{sec:involutivedata}

\begin{definition}
\label{def:reflectionfunctor}
The \emph{total orientation reversal functor}
\begin{equation}
  R \colon \Bord_n^{\textrm{def}}(\bD) \longrightarrow \Bord_n^{\textrm{def}}(\bD)
\end{equation}
sends a decorated stratified manifold (or object) to the same underlying stratified manifold with the orientations of \emph{all} strata reversed and all labels unchanged.
\end{definition}

The functor $R$ is the operation $(-)^{\mathrm{rev}}$ of~\cite{CRS18} applied to objects and bordisms. It is well-defined for \emph{every} set of defect data, with no further compatibility required: total orientation reversal modifies the decorated link of a stratum by total orientation reversal in one dimension lower, and this is admissible precisely by the duality condition \eqref{eq:CRSduality}, which is hardwired into the definition of defect data. For $n = 2$ this specializes to the statement that reversing both the ambient orientation and the orientation of a 1-stratum preserves its coorientation, so the source and target conditions for the unchanged line labels continue to hold. Moreover $R^2 = \mathrm{id}$ strictly, and $R$ is symmetric monoidal. We write $\overline{(-)} \colon \Vect_\bC \to \Vect_\bC$ for the complex conjugation functor.

The value of $R$ on an object is a canonical model for its dual, as follows from the standard bending of cylinders argument, which we record as:

\begin{lemma}
\label{lem:Rvsdual}
\leavevmode
\begin{enumerate}
  \item Every object $E \in \Bord_n^{\textrm{def}}(\bD)$ is dualizable, with $E^\vee$ given by $E$ with the orientations of all strata reversed and all labels unchanged; evaluation and coevaluation are the bent cylinders over $E$.
  \item $R(E) = E^\vee$ for every object $E$, i.e.\ $R$ agrees with $(-)^\vee$ on objects.
\end{enumerate}
\end{lemma}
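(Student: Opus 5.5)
The plan is the standard bending-of-cylinders argument, carried out carefully enough to keep track of the orientations of all strata. For an object $E$ let $E^\vee := R(E)$, the same stratified $(n-1)$-manifold with all orientations reversed and labels unchanged; part~(2) then holds by definition, and the content of the lemma is part~(1). Take the cylinder $E \times [0,1]$ with product stratification and decoration, oriented by $\mathrm{or}(E) \times (\uparrow)$, and its strata oriented as products of strata of $E$ with the interval. Bending both ends downward gives a bordism $\mathrm{ev}_E \colon E^\vee \sqcup E \to \varnothing$; bending both ends upward gives $\mathrm{coev}_E \colon \varnothing \to E \sqcup E^\vee$.

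The first step is to check that these are valid morphisms of $\Bord_n^{\textrm{def}}(\bD)$ with the stated boundaries. The interior decoration is that of a cylinder, so it is admissible. The issue is reading off the boundary: the end of the cylinder whose outward normal has been reversed relative to the collar convention is read as the object with every stratum orientation reversed. Because the ambient orientation is $\mathrm{or}(E) \times (\uparrow)$, flipping the collar direction flips the induced orientation on the top stratum of the boundary. For a lower stratum $\sigma \times [0,1]$, the induced orientation on $\sigma$ flips in the same way, so the end is read as $R(E)$. The labels stay admissible by the duality condition \eqref{eq:CRSduality}, exactly as in the discussion after Definition~\ref{def:reflectionfunctor}. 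For $n=2$ I will check this explicitly against Figure~\ref{fig:conventions}. A marked point $(X,\varepsilon)$ read through a reversed collar becomes $(X,\varepsilon)$ on $E$ with reversed orientation. The source and target intervals swap sides relative to the new orientation of $E$, but they also swap sides under the reversed vertical direction, so the matching conditions are preserved.

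The second step is the two zig-zag identities, $(\mathrm{id}_E \sqcup \mathrm{ev}_E) \circ (\mathrm{coev}_E \sqcup \mathrm{id}_E) = \mathrm{id}_E$ and the analogous identity for $E^\vee$. Each composite is an S-shaped bent cylinder over $E$. It is diffeomorphic, relative to the boundary, to the straight cylinder $E \times [0,1]$ by a diffeomorphism of the form $\phi \times \mathrm{id}_E$, where $\phi$ is an isotopy of the S-shaped curve in the plane to a straight interval. This product diffeomorphism preserves the stratification, labels and orientations of all strata, because everything is pulled back from $E$. Since morphisms are taken up to decoration-preserving isomorphism relative to the boundary, the composite equals the identity cylinder, which is $\mathrm{id}_E$ by the collar conventions.

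The main obstacle is the first step: making the orientation bookkeeping for all strata at the bent ends consistent with the collar conventions of~\cite[\S 5]{Ste23} and~\cite{CRS18}, in particular checking that the reversed reading is exactly $(-)^{\mathrm{rev}}$ and not a reversal of only the top stratum. I would handle this inductively along the link structure. The link of $\sigma \times [0,1]$ is the link of $\sigma$ in $E$, and reversing the collar acts on it by total reversal one dimension lower, which matches the inductive definition of the adjacency maps. The zig-zag step is routine once the germ conventions are fixed.
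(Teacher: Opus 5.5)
Your proposal is correct and takes the same approach as the paper, which gives no written proof beyond invoking the standard bending-of-cylinders argument and Figure~\ref{fig:bentcylinder}; your bent cylinders $\mathrm{ev}_E,\mathrm{coev}_E$, with the zig-zag identities obtained from a product diffeomorphism $\phi\times\mathrm{id}_E$, are a faithful expansion of that argument. One small fix in your $n=2$ check: a marked point $(X,\varepsilon)$ on the reversed end is read as $(X,-\varepsilon)$ (as in Figure~\ref{fig:bentcylinder}), and it is this sign flip that compensates for the left/right interchange caused by reversing the orientation of $E$.
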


The bent cylinder is shown for $n = 2$ in Figure~\ref{fig:bentcylinder}.

\begin{figure}[ht]
\centering
\begin{tikzpicture}[scale=0.85]
  \draw (-2,0) -- (-2,1) arc (180:0:2) -- (2,0);
  \draw (-1,0) -- (-1,1) arc (180:0:1) -- (1,0);
  \draw[very thick] (-2,0) -- (-1,0);
  \draw[very thick] (1,0) -- (2,0);
  \draw[thick] (-1.5,0) -- (-1.5,1) arc (180:0:1.5) -- (1.5,0);
  \draw[thick,->] (1.5,0.25) -- (1.5,0.75);
  \draw[thick,->] (-1.5,0.75) -- (-1.5,0.25);
  \fill (-1.5,0) circle (1.6pt);
  \fill (1.5,0) circle (1.6pt);
  \node[below] at (-1.5,-0.05) {$(X,-)$};
  \node[below] at (1.5,-0.05) {$(X,+)$};
  \node[below] at (-1.5,-0.6) {$R(E) = E^\vee$};
  \node[below] at (1.5,-0.6) {$E$};
\end{tikzpicture}
\caption{Schematic sketch of the evaluation bent cylinder $h_E \colon E^\vee \sqcup E \to \varnothing$ over an object $E$ with one marked point $(X,+)$, both ends incoming. The defect line enters the $E$-end inward and exits the other end outward, flipping the sign.}
\label{fig:bentcylinder}
\end{figure}

\begin{remark}
\label{rem:hermitianstructures}
In the language of~\cite[Appendix B]{FH16} and~\cite{SS23,Ste23}, the symmetric monoidal $\bZ_2$-action $R$ together with the duals $(-)^\vee$ combine to produce an anti-involution $E \longmapsto R(E^\vee)$ on $\Bord_n^{\textrm{def}}(\bD)$. The hermitian structures of~\cite{SS23,Ste23} are exactly the data trivialising this anti-involution on objects, i.e.\ isomorphisms $E \xrightarrow{\sim} R(E^\vee)$; with the specific realization $E^\vee = R(E)$ of Lemma~\ref{lem:Rvsdual}, these can be chosen to be trivial ($R(E^\vee) = R(R(E)) = E$ strictly, so the identity isomorphisms suffice), and the nontrivial content is the induced action on morphisms, which reverses their direction. For other tangential structures the choice of dual realization is more subtle and the hermitian structures are genuinely nontrivial; we refer to~\cite{Ste23} for a detailed discussion. As explained in~\cite{SS23,Ste23}, a compatible collection of hermitian structures canonically turns $\Bord_n^{\textrm{def}}(\bD)$ into a symmetric monoidal dagger category; we spell out the resulting dagger structure in the next subsection.
\end{remark}

\begin{definition}
\label{def:hermitiandefectTQFT}
A \emph{reflection defect TQFT} is a $\bZ_2$-equivariant symmetric monoidal functor $\cZ \colon \Bord_n^{\textrm{def}}(\bD) \to \Vect_\bC$. The equivariance datum is the additional choice of a symmetric monoidal natural isomorphism
\begin{equation}
  \rho \colon \cZ \circ R \Longrightarrow \overline{\cZ}
\end{equation}
satisfying the equivariance coherence condition $\overline{\rho} \circ (\rho \circ R) = \mathrm{id}_{\cZ}$.
\end{definition}

\begin{definition}
\label{def:RPproperty}
For an object $E \in \Bord_n^{\textrm{def}}(\bD)$ denote by $h_E \colon R(E) \sqcup E \to \varnothing$ the evaluation bent cylinder, using $R(E) = E^\vee$ from Lemma~\ref{lem:Rvsdual}, and define
\begin{equation}
  b_E := \cZ(h_E) \circ \big(\rho_E^{-1} \otimes \mathrm{id}\big) \colon \overline{\cZ(E)} \otimes \cZ(E) \longrightarrow \bC \, ,
\end{equation}
where $\rho_E^{-1} \colon \overline{\cZ(E)} \to \cZ(R(E))$ is the inverse of the structure isomorphism. A reflection defect TQFT $(\cZ, \rho)$ is \emph{reflection positive} if $b_E$ is positive definite for every object $E$.
\end{definition}

\begin{remark}
\label{rem:hermiticityofb}
That $b_E$ is hermitian, i.e.\ $\overline{b_E(v,w)} = b_E(w,v)$, follows from the fixed point coherence condition on $\rho$, as in the closed case~\cite{FH16,Ste23}.
\end{remark}

\subsection{The dagger category of defect bordisms}
\label{sec:daggerdefbordcat}
As explained in Remark~\ref{rem:hermitianstructures}, the trivial hermitian structures provided by Lemma~\ref{lem:Rvsdual} equip $\Bord_n^{\textrm{def}}(\bD)$ with the structure of a symmetric monoidal dagger category, which we record explicitly.
\begin{definition}
\label{def:daggerstructure}
For a defect bordism $\Sigma \colon E \to F$ in $\Bord_n^{\textrm{def}}(\bD)$ define
\begin{equation}
  \Sigma^\dg \colon F \longrightarrow E
\end{equation}
to be the bordism with the same underlying stratified manifold, the orientations of all strata reversed, all labels unchanged, regarded as a bordism from $F$ to $E$ via the canonical identification $\partial R(\Sigma) \cong R(\partial \Sigma)$ together with the exchange of incoming and outgoing boundary. This turns $\Bord_n^{\textrm{def}}(\bD)$ into a symmetric monoidal dagger category. 
\end{definition}

By the general theory of~\cite[Thm.~2.3.41, \S\S2.4, 5.2]{Ste23}, Definition~\ref{def:hermitiandefectTQFT} together with the reflection positivity condition of Definition~\ref{def:RPproperty} is equivalent to the following: the equivariance datum $\rho$ makes $\cZ$ into an anti-involutive functor from $(\Bord_n^{\textrm{def}}(\bD), R)$ to $(\Vect_\bC, \overline{(-)})$, and positivity of $b_E$ for all $E$ says precisely that $\cZ$ maps the canonical positivity structure on $\Bord_n^{\textrm{def}}(\bD)$ (given by the trivial hermitian pairings of Remark~\ref{rem:hermitianstructures}) to the positive-definite positivity structure on $\Hilb$; the dagger functor formulation packages both pieces of data at once.

\begin{definition}
\label{def:RPdefectTQFT}
A \emph{reflection positive defect TQFT} is a symmetric monoidal dagger functor
\begin{equation}
  \cZ \colon \Bord_n^{\textrm{def}}(\bD) \longrightarrow \Hilb \, .
\end{equation}
\end{definition}

\begin{remark}
\label{rem:positivityforfree}
In this formulation positivity is wired in through the target: every state space $\cZ(E)$ is a Hilbert space. Since the 2-morphism spaces of the bicategory extracted in Section~\ref{sec:extractionsection} for $n = 2$ are state spaces of decorated circles, they will be Hilbert spaces in a reflection positive defect TQFT.
\end{remark}

\section{Higher categories of defects in reflection TQFTs}
\label{sec:extractionsection}

We now extract the $\Or(2)$-dagger structure from a 2-dimensional reflection defect TQFT, warming up with the one-dimensional case (Section~\ref{sec:dim1}) before recalling the two-dimensional defect bicategory $\cT_\cZ$ (Section~\ref{sec:defectTQFT}) and proving the main theorem in dimension two (Section~\ref{sec:dim2}).

\subsection{Dimension one}
\label{sec:dim1}

We begin with $n = 1$. Fix one-dimensional defect data, a set $D_1$ of bulk labels; as in the $D_0$-completion of~\cite[\S 2.4]{CRS18} adopted in Section~\ref{sec:proposal}, interfaces are computed by the theory. Let $\cZ \colon \Bord_1^{\textrm{def}}(\bD)\to \Vect_\bC$ be a 1-dimensional defect TQFT. For $\alpha, \beta \in D_1$ the link of a putative interface is the decorated $0$-sphere $S^0_{\alpha,\beta}$ (a negatively oriented point labelled $\alpha$ and a positively oriented point labelled $\beta$), an object of $\Bord_1^{\textrm{def}}(\bD)$, and
\begin{equation}
	\operatorname{Hom}(\alpha,\beta) := \cZ\big(S^0_{\alpha,\beta}\big) \, .
\end{equation}
 Writing $V_\alpha := \cZ(\mathrm{pt}_\alpha)$ one has $\cZ(S^0_{\alpha,\beta}) \cong \operatorname{Hom}_\bC(V_\alpha, V_\beta)$, so the extracted category $\cT_\cZ$ has the bulk state spaces as objects and \emph{all} linear maps as morphisms, with composition induced by the interval bordism.

With $E := S^0_{\alpha,\beta}$, total orientation reversal gives $S^0_{\beta,\alpha} = R(E)$ (Lemma~\ref{lem:Rvsdual}), and the equivariance datum $\rho_E \colon \cZ(R(E)) \xrightarrow{\;\sim\;} \overline{\cZ(E)}$ \emph{is} the dagger---the anti-linear isomorphism
\begin{equation}
	\dg \colon \operatorname{Hom}(\alpha,\beta) \longrightarrow \operatorname{Hom}(\beta,\alpha), \qquad \varphi^\dg := \rho_E^{-1}(\overline{\varphi}) \, ,
\end{equation}
i.e.\ $\operatorname{Hom}(\beta,\alpha) \cong \overline{\operatorname{Hom}(\alpha,\beta)}$. Involutivity $(\varphi^\dg)^\dg = \varphi$ is immediate from the coherence $\overline{\rho} \circ (\rho \circ R) = \mathrm{id}$ (Definition~\ref{def:hermitiandefectTQFT}) and functoriality from naturality of $\rho$. Equivalently, $\varphi^\dg$ is the adjoint for the bent-interval pairing, i.e.\ $\cZ(h)(\varphi^\dg \otimes \psi) = b_E(\varphi, \psi)$ through the cup $h \colon S^0_{\beta,\alpha} \sqcup S^0_{\alpha,\beta} \to \varnothing$ (Figure~\ref{fig:bentinterval}), with $b_E$ the hermitian form of Definition~\ref{def:RPproperty}.

\begin{figure}[ht]
	\centering
	\begin{tikzpicture}[scale=0.95]
		\draw[thick] (-1.5,0.8) arc (180:360:1.5);
		\draw[thick] (-0.5,0.8) arc (180:360:0.5);
		\fill (-1.5,0.8) circle (1.6pt);
		\fill (-0.5,0.8) circle (1.6pt);
		\fill (0.5,0.8) circle (1.6pt);
		\fill (1.5,0.8) circle (1.6pt);
		\node[left] at (-1.05,-0.55) {$\alpha$};
		\node[right] at (-0.02,-0.05) {$\beta$};
		\node[below] at (-1.0,0.95) {$S^0_{\beta,\alpha}$};
		\node[below] at (1.0,0.95) {$S^0_{\alpha,\beta}$};
	\end{tikzpicture}
	\caption{The evaluation bent interval $h \colon S^0_{\beta,\alpha} \sqcup S^0_{\alpha,\beta} \to \varnothing$ in dimension one: the elementary cup pairing $\operatorname{Hom}(\beta,\alpha) \otimes \operatorname{Hom}(\alpha,\beta) \to \bC$, with the two arcs coloured by the bulk labels $\alpha$ and $\beta$. It is the one-dimensional analogue of the bent cylinder of Figure~\ref{fig:bentcylinder}.}
	\label{fig:bentinterval}
\end{figure}

\begin{proposition}
	\label{prop:dim1}
	For a 1-dimensional reflection defect TQFT $(\cZ, \rho)$ the category $\cT_\cZ$ is a $\bC$-linear dagger category whose dagger is complex anti-linear.
\end{proposition}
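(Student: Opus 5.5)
The plan is to check the dagger axioms one at a time, using only the definition $\varphi^\dg := \rho_E^{-1}(\overline{\varphi})$ and the properties of $\rho$ from Definition~\ref{def:hermitiandefectTQFT}. The dagger is an anti-linear map $\operatorname{Hom}(\alpha,\beta)\to\operatorname{Hom}(\beta,\alpha)$. We must show it is the identity on objects, involutive, and contravariant for composition, and that composition is bilinear. Bilinearity is immediate: composition is $\cZ$ of the interval bordism $I\colon S^0_{\beta,\gamma}\sqcup S^0_{\alpha,\beta}\to S^0_{\alpha,\gamma}$, precomposed with the monoidal structure isomorphism of $\cZ$. Anti-linearity holds because $\rho_E^{-1}$ is $\bC$-linear as a map $\overline{\cZ(E)}\to\cZ(R(E))$ and the identity map $\cZ(E)\to\overline{\cZ(E)}$ is anti-linear.

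\textbf{Involutivity.} We use $R^2=\mathrm{id}$ and the coherence $\overline{\rho_{E}}\circ\rho_{R(E)}=\mathrm{id}_{\cZ(E)}$. This gives $\rho_{R(E)}^{-1}=\overline{\rho_E}$, so
\[
(\varphi^\dg)^\dg=\rho_{R(E)}^{-1}\big(\overline{\rho_E^{-1}(\overline\varphi)}\big)=\overline{\rho_E}\,\overline{\rho_E^{-1}}(\varphi)=\varphi .
\]

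\textbf{Identities.} The identity of $\alpha$ is $\cZ$ of the bent interval $\varnothing\to S^0_{\alpha,\alpha}$. Applying $R$ to this bordism returns the same decorated interval, now regarded as a bordism into $R(S^0_{\alpha,\alpha})=S^0_{\alpha,\alpha}$. Naturality of $\rho$ with respect to this morphism, together with its monoidality (which gives $\rho_\varnothing=\mathrm{id}_\bC$), yields $\mathrm{id}_\alpha^\dg=\mathrm{id}_\alpha$.

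\textbf{Contravariance.} This is the main step. Write $E_1=S^0_{\alpha,\beta}$, $E_2=S^0_{\beta,\gamma}$ and $F=S^0_{\alpha,\gamma}$. Naturality of $\rho$ applied to $I$ gives
\[
\rho_F\circ\cZ(R(I))=\overline{\cZ(I)}\circ\rho_{E_2\sqcup E_1}.
\]
Monoidality splits $\rho_{E_2\sqcup E_1}$ as $\rho_{E_2}\otimes\rho_{E_1}$. It remains to identify $R(I)$ geometrically. It is the same interval with all orientations reversed, viewed as a bordism $R(E_2)\sqcup R(E_1)=S^0_{\gamma,\beta}\sqcup S^0_{\beta,\alpha}\to S^0_{\gamma,\alpha}$. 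This is exactly the composition bordism for $\operatorname{Hom}(\beta,\alpha)\otimes\operatorname{Hom}(\gamma,\beta)\to\operatorname{Hom}(\gamma,\alpha)$, up to the symmetry swapping the two tensor factors. Hence $\cZ(R(I))(\varphi^\dg\otimes\psi^\dg)$, suitably reordered, equals $\varphi^\dg\circ\psi^\dg$, and the naturality square reads $(\psi\circ\varphi)^\dg=\varphi^\dg\circ\psi^\dg$.

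The delicate point I expect is bookkeeping: checking that the reversed interval, read with the incoming and outgoing boundary conventions, really is the composition bordism in the opposite order, rather than its mirror image. I would first verify this on the picture of the interval with its three labelled endpoints. I would then confirm it against the alternative characterization of $\varphi^\dg$ as the adjoint with respect to the bent-interval pairing $b_E$.
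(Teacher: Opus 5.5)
Your proposal is correct and follows the paper's argument: involutivity comes from the equivariance coherence $\overline{\rho}\circ(\rho\circ R)=\mathrm{id}$, and contravariance comes from naturality and monoidality of $\rho$ on the composition bordism. This matches the paper's one-line justification and its 2-dimensional analogue, Lemma~\ref{lem:daggerfunctorial}. Your direct check of $\mathrm{id}_\alpha^\dg=\mathrm{id}_\alpha$ on the cup is a harmless alternative to the formal deduction of the unit law from contravariance and bijectivity used there. The bookkeeping you flag is trivial in dimension one: each component carries a single label, so the bordisms are determined up to diffeomorphism rel boundary by which endpoints their components join, and hence $R(I)$ is the reversed-order composition bordism precomposed with the swap $\sigma$.
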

 For a single bulk label this recovers the classification of one-dimensional theories with reflection structure by hermitian vector spaces~\cite{MS23}; reflection positivity singles out the positive definite ones (Section~\ref{sec:positivity}).

\subsection{The defect bicategory \texorpdfstring{$\cT_\cZ$}{TZ} in dimension 2}
\label{sec:defectTQFT}

We recall the bicategory $\cT_\cZ$ that a $2$-dimensional defect TQFT $\cZ \colon \Bord_2^{\textrm{def}}(\bD) \to \Vect_\bC$ determines, due to~\cite{DKR11} and reviewed in detail in~\cite[\S 2.3]{Car16}. Our conventions for $n = 2$ are those of~\cite{Car16}, set up in Section~\ref{sec:proposal}. It is convenient to extend the source and target maps to signed labels by $s(x,+) = s(x)$, $t(x,+) = t(x)$ and $s(x,-) = t(x)$, $t(x,-) = s(x)$.

\emph{Objects and 1-morphisms.} The objects of $\cT_\cZ$ are the bulk labels $D_2$. A 1-morphism $\alpha \to \beta$ is a composable list $X = ((x_1,\varepsilon_1), \dots, (x_n,\varepsilon_n))$ with $n \geq 0$, $s(x_n,\varepsilon_n) = \alpha$, $t(x_1,\varepsilon_1) = \beta$ and $s(x_i,\varepsilon_i) = t(x_{i+1},\varepsilon_{i+1})$, representing parallel defect lines; no fusion is performed at the level of labels. Horizontal composition is concatenation of lists, and the unit $1_\alpha$ is the empty list.

\emph{2-morphisms.} For 1-morphisms $X, Y \colon \alpha \to \beta$ let $E_{X,Y}$ denote the defect circle of~\cite[eq.~(2.16)]{Car16}: the points $(y_1,\nu_1)$, \dots, $(y_m,\nu_m)$ of $Y$ sit on the upper half, the points $(x_1,-\varepsilon_1)$, \dots, $(x_n,-\varepsilon_n)$ of $X$ with reversed signs on the lower half, and the two remaining intervals are labelled $\beta$ (left) and $\alpha$ (right). Then
\begin{equation}
  \operatorname{Hom}(X,Y) := \cZ(E_{X,Y}) \, .
\end{equation}
The defect circle is shown in Figure~\ref{fig:EXY}. The interpretation is that the TQFT computes the label set of junction points itself, by cutting out a hole around a putative junction.

\begin{figure}[ht]
\centering
\begin{tikzpicture}[scale=1.15]
  \draw[->] (1.3,0) arc (0:360:1.3);
  \fill (115:1.3) circle (1.5pt); \node[above left] at (115:1.3) {$(y_1,\nu_1)$};
  \fill (65:1.3) circle (1.5pt); \node[above right] at (65:1.3) {$(y_m,\nu_m)$};
  \node at (90:1.05) {$\cdots$};
  \fill (245:1.3) circle (1.5pt); \node[below left] at (245:1.3) {$(x_1,-\varepsilon_1)$};
  \fill (295:1.3) circle (1.5pt); \node[below right] at (295:1.3) {$(x_n,-\varepsilon_n)$};
  \node at (270:1.05) {$\cdots$};
  \node[left] at (180:1.4) {$\beta$};
  \node[right] at (0:1.4) {$\alpha$};
\end{tikzpicture}
\caption{The defect circle $E_{X,Y}$ for $X, Y \colon \alpha \to \beta$, with $\operatorname{Hom}(X,Y) = \cZ(E_{X,Y})$: the points of $Y$ on the upper half, those of $X$ with reversed signs on the lower half.}
\label{fig:EXY}
\end{figure} Vertical composition is $\Psi \circ \Phi := \cZ(P_{X,Y,Z})(\Psi \otimes \Phi)$, where $P_{X,Y,Z} \colon E_{Y,Z} \sqcup E_{X,Y} \to E_{X,Z}$ is the pair of pants in which the $Y$-points of the two ingoing circles are joined by parallel defect lines in the interior; the unit is $1_X = \cZ(D_X)(1)$ for the disk $D_X \colon \varnothing \to E_{X,X}$ with parallel defect lines joining the $X$-points on the lower half to those on the upper half. Horizontal composition of 2-morphisms uses a second, differently decorated pair of pants. Associativity and unitality follow from isotopy invariance and functoriality.

In~\cite{DKR11} it is shown that $\cT_\cZ$ is a pivotal bicategory. The adjoint of a 1-morphism is orientation reversal of the defect lines, implemented on lists as
\begin{equation}
  X^L := \big((x_n, -\varepsilon_n), \dots, (x_1, -\varepsilon_1)\big) \colon \beta \longrightarrow \alpha \, ,
\end{equation}
which is simultaneously left and right adjoint to $X$; note $\operatorname{Hom}(X,Y) = \operatorname{Hom}(1_\beta, Y \otimes X^L)$. The left adjunction is exhibited by $\mathrm{ev}_X \colon X^L \otimes X \Rightarrow 1_\alpha$ and $\mathrm{coev}_X \colon 1_\beta \Rightarrow X \otimes X^L$, given by $\cZ$ of ``rainbow'' disks with nested arcs~\cite[eqs.~(2.18), (2.19)]{Car16}; the right adjunction data $\widetilde{\mathrm{ev}}_X, \widetilde{\mathrm{coev}}_X$ is obtained by reversing all orientations and orders. The Zorro moves are isotopies. Note that the rainbow data is normalized and multiplicative by construction: the rainbow disk of a concatenated list is the nesting of the individual rainbows (an isotopy identifies it with the corresponding composite), and the disk for the empty list is the identity 2-morphism.

\subsection{Reflection structure}
\label{sec:dim2}
Let $(\cZ, \rho)$ be a 2-dimensional reflection defect topological quantum field theory in the sense of Definition~\ref{def:hermitiandefectTQFT}, with defect bicategory $\cT_\cZ$ (Section~\ref{sec:defectTQFT}) built from the underlying $\Vect_\bC$-valued functor. Our main result is the following.

\begin{theorem}
\label{thm:extraction}
Let $(\cZ, \rho)$ be a reflection defect topological quantum field theory. Then the bicategory $\cT_\cZ$ of~\cite{DKR11,Car16} carries an $\Or(2)$-dagger structure in the sense of Definition~\ref{def:strictO2}, with $\dg$ induced by the reflection equivariance and $(-)^L$ induced by rotation of the defect lines by $\pi$.
\end{theorem}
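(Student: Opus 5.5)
The plan is to build $\dg$ directly from the equivariance datum $\rho$, in the same way as in the one-dimensional case (Proposition~\ref{prop:dim1}). Then we check the axioms of Definition~\ref{def:strictO2} one at a time, reducing each to an equality of decorated bordisms up to isotopy together with naturality of $\rho$.

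\emph{Step 1: the reflection of the defect circle.} For $X,Y\colon\alpha\to\beta$, compute $R(E_{X,Y})$. Reversing the orientation of the circle and all signs gives a circle whose upper half carries the $Y$-points with reversed signs and whose lower half carries the $X$-points. Reflect it in the horizontal axis; this is an orientation-preserving diffeomorphism once the circle orientation is also reversed, i.e.\ a rotation composed with a reflection. The result is $E_{Y,X}$, with the labels $\alpha,\beta$ kept on the right and left. Fix this identification $\sigma_{X,Y}\colon R(E_{X,Y})\cong E_{Y,X}$ via a specific diffeomorphism, namely the reflection $z\mapsto\bar z$ of the unit circle. Then define
\begin{equation*}
\Phi^\dg:=\cZ(\sigma_{X,Y})\,\rho_{E_{X,Y}}^{-1}(\overline{\Phi})\in\cZ(E_{Y,X}) .
\end{equation*}
Anti-linearity is immediate. $\dg^2=\mathrm{id}$ follows from the coherence $\overline\rho\circ(\rho\circ R)=\mathrm{id}$ together with $\sigma_{Y,X}\circ R(\sigma_{X,Y})=\mathrm{id}$, since reflecting twice is the identity.

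\emph{Step 2: functoriality of $\dg$.} Use naturality of $\rho$ with respect to the pairs of pants. For vertical composition, $R(P_{X,Y,Z})$ is identified under the $\sigma$'s with $P_{Z,Y,X}$ with its inputs swapped. Geometrically, reflecting the pants across the horizontal axis exchanges which circle sits ``on top''. This gives $(\Psi\circ\Phi)^\dg=\Phi^\dg\circ\Psi^\dg$. Monoidality of $\rho$ handles the tensor product of inputs. The same argument applied to the horizontal pants gives preservation of $\otimes$, because the reflection is horizontal and so keeps the left-to-right order of the defect lines. The units follow from $R(D_X)\cong D_X$ under $\sigma$.

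\emph{Step 3: the dual functor and its compatibility.} Take $(-)^L$ and the rainbow data $\mathrm{ev},\mathrm{coev}$ from Section~\ref{sec:defectTQFT}. By the remark there these are normalized and multiplicative, so by Remark~\ref{rem:strictcase} $\nu=\iota=\mathrm{id}$, which are trivially unitary. It remains to show $(\varphi^\dg)^L=(\varphi^L)^\dg$. By Remark~\ref{rem:repackaging} it suffices to check the extranaturality identities for the daggered map, and for this I would first prove a key geometric lemma: $\mathrm{ev}_X^\dg=\widetilde{\mathrm{coev}}_X$ and $\mathrm{coev}_X^\dg=\widetilde{\mathrm{ev}}_X$. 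This should hold because the reflected rainbow disk is, up to isotopy, exactly the orientation-reversed rainbow that defines the right adjunction data. The equality is then the statement that $\cZ(R(D))$ composed with $\rho$ sends $\overline{\cZ(D)(1)}$ to $\cZ(R(D))(1)$, which is naturality of $\rho$ for bordisms out of $\varnothing$ with $\rho_\varnothing=\mathrm{id}$. Daggering the formula for $\varphi^L$, and using that $\dg$ reverses vertical composition and preserves $\otimes$, gives $(\varphi^L)^\dg$ as the right mate of $\varphi^\dg$ with respect to the tilde data. Finally, a Zorro/isotopy argument in $\cT_\cZ$ identifies the right mate with the left mate: both are computed by the same disk with the $\varphi^\dg$-hole and a defect line bent by $\pi$ in opposite directions, which is isotopic rel boundary. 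This is the pivotality of~\cite{DKR11}.

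\emph{Main obstacle.} I expect the hard part to be the bookkeeping in Steps 1 and 3. One must choose the identifications $\sigma$ coherently so that $R$ of every pants and rainbow disk is identified with the \emph{correct} bordism, including the order of the points and the swap of inputs, and not with a version rotated by $\pi$. If that rotation crept in, $\dg$ would be conflated with $(-)^L$. I would fix all circles as the unit circle with points placed symmetrically about the horizontal axis, take $\sigma$ to be complex conjugation throughout, and verify each identification by drawing the reflected picture explicitly.
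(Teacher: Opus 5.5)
Your proposal is correct and follows the paper's proof essentially step for step. The shared steps are: the dagger built from $\rho$ and the reflection $z\mapsto\bar z$; functoriality from naturality of $\rho$ on the pants, with the incoming circles swapped for vertical composition; the key identity $\mathrm{ev}_X^\dg=\widetilde{\mathrm{coev}}_X$; and the pivotality of~\cite{DKR11}, which turns $(\varphi^L)^\dg$, the right mate of $\varphi^\dg$, into the left mate $(\varphi^\dg)^L$. One small correction: the annuli with the lines bent by $+\pi$ and by $-\pi$ are not isotopic rel boundary, since they differ by a full Dehn twist; they are equal as morphisms of $\Bord_2^{\textrm{def}}(\bD)$ because bordisms are taken up to diffeomorphism rel boundary, which is exactly how the paper justifies this step.
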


If in addition $(\cZ,\rho)$ is reflection positive, this $\Or(2)$-dagger structure is positive (the forms $b_E$ are positive definite); we treat that in Section~\ref{sec:positivity}. The proof of Theorem~\ref{thm:extraction} occupies the remainder of this section.

We start by constructing the dagger, which is again induced by the reflection isomorphism $\rho$.

\paragraph{Part 1: construction of the dagger.}

We first rigidify the decorated circles, so that the diffeomorphisms used below are canonical.

\begin{remark}
\label{rem:canonicityu}
We fix the defect circles once and for all, following~\cite{Car16,DKR11}: the marked points sit at prescribed standard angular positions, chosen to be symmetric under the reflection across the real axis, and the basepoint $-1 \in S^1$ carries no marked point, ruling out identifications by nontrivial rotations. All the decorated circles below are presented in this standard form, so the diffeomorphisms and gluings that follow are canonical---unambiguous even when the circle has a cyclic label symmetry (e.g.\ all labels equal).
\end{remark}

\emph{Reflection of defect circles.}

\begin{lemma}
\label{lem:EXYdual}
For all 1-morphisms $X, Y \colon \alpha \to \beta$ there is an isomorphism
\begin{equation}
  u_{X,Y} \colon R(E_{X,Y}) = E_{X,Y}^\vee \xrightarrow{\;\sim\;} E_{Y,X}
\end{equation}
in $\Bord_2^{\textrm{def}}(\bD)$, induced by a decoration-preserving, orientation-preserving diffeomorphism.
\end{lemma}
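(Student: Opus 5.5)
The plan is to write down an explicit diffeomorphism of the underlying circle, namely complex conjugation $c \colon S^1 \to S^1$, $z \mapsto \bar z$ (the reflection across the real axis), and check that it carries the decorated circle $R(E_{X,Y})$ to $E_{Y,X}$ with all decorations matching.

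\emph{Orientation.} $E_{X,Y}$ carries the counterclockwise orientation, so $R(E_{X,Y})$ is the same circle oriented clockwise. Since $c$ reverses the standard orientation, it is orientation-preserving as a map from $R(E_{X,Y})$ (clockwise) to $E_{Y,X}$ (counterclockwise). It fixes the basepoint $-1$ and $+1$, so the two distinguished intervals around $\pm 1$ are preserved. The left interval labelled $\beta$ and the right interval labelled $\alpha$ are therefore sent to intervals with the same labels, and these are exactly the labels $E_{Y,X}$ requires, because $X,Y \colon \alpha \to \beta$ in both cases.

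\emph{Marked points.} Under $c$ the upper half of $R(E_{X,Y})$, which carries the points $(y_j,\nu_j)$, maps to the lower half. The lower half, carrying $(x_i,-\varepsilon_i)$, maps to the upper half. We must check two things about the signs.
\begin{enumerate}
\item Total orientation reversal leaves the labels unchanged but flips the sign $\varepsilon_p$ of each marked point. The sign records whether the germ line is oriented along the positive vertical direction, and in the germ $E \times (-1,1)$ with ambient orientation $\mathrm{or}(E)\times(\uparrow)$ reversing both $\mathrm{or}(E)$ and the line leaves the vertical direction fixed. In this sign convention a point $(x,\varepsilon)$ of $E$ therefore becomes $(x,-\varepsilon)$ in $R(E)=E^\vee$, consistent with Figure~\ref{fig:bentcylinder}.
\item Consequently, after $c$ the upper half of the image carries $(x_i,\varepsilon_i)$ and the lower half carries $(y_j,-\nu_j)$. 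This is precisely the decoration of $E_{Y,X}$: the points of the target $X$ sit on top with their original signs, and the points of the source $Y$ sit on the bottom with reversed signs.
\end{enumerate}
The order also matches. On $E_{X,Y}$ the points $y_1,\dots,y_m$ are read left to right on top and $x_1,\dots,x_n$ left to right on the bottom. Complex conjugation preserves the real part, so left-to-right order is kept. Since the standard positions were chosen symmetric under reflection across the real axis (Remark~\ref{rem:canonicityu}), $c$ sends standard positions to standard positions exactly.

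\emph{Consistency and the main obstacle.} The source and target conditions on adjacent intervals hold automatically. Total reversal preserves coorientations (Section~\ref{sec:proposal}), and an orientation-preserving diffeomorphism preserves them as well. The image of $c$ is then the mapping cylinder of $c$, an invertible bordism giving $u_{X,Y}$. The step I expect to be delicate is the sign bookkeeping in item 1: checking carefully that $R$ flips $\varepsilon$ relative to the germ convention for incoming versus outgoing reading, rather than leaving it fixed. I would verify this against Lemma~\ref{lem:Rvsdual} and the bent cylinder of Figure~\ref{fig:bentcylinder}, where the $E$-end has $(X,+)$ and the $E^\vee$-end has $(X,-)$. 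Once that sign is settled, the rest is matching positions.
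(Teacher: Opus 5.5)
Your proposal is correct and follows the paper's proof. Both take $u_{X,Y}$ to be the reflection $z\mapsto\bar z$ of the standard circle: it fixes the basepoint $-1$ and the $\beta$/$\alpha$ intervals and swaps the two halves, and together with the sign flip $(x,\varepsilon)\mapsto(x,-\varepsilon)$ under $R$ it carries $R(E_{X,Y})$ onto $E_{Y,X}$, with canonicity supplied by the reflection-symmetric standard positions of Remark~\ref{rem:canonicityu}. You give more detail than the paper on the sign flip (via the germ $E\times(-1,1)$, where the vertical direction is unchanged and the line orientation is reversed) and on the preserved left-to-right order of the marked points, but the argument is the same.
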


\begin{proof}
By Lemma~\ref{lem:Rvsdual}, $R(E_{X,Y})$ is $E_{X,Y}$ with the circle orientation and all coorientations reversed and the labels unchanged, so the $Y$-points become $(y_j, -\nu_j)$ and the $X$-points become $(x_i, \varepsilon_i)$. The orientation-reversing reflection of the circle across the horizontal axis through the basepoint, $z \mapsto \bar z$, restores the standard orientation: it fixes the basepoint $-1$ and the intervals $\beta$ (left) and $\alpha$ (right) and exchanges the upper and lower halves. It therefore carries the $X$-points to the upper half and the $Y$-points to the lower half, producing the decorated circle with $X$-points $(x_i,\varepsilon_i)$ above and $Y$-points $(y_j,-\nu_j)$ below---which is $E_{Y,X}$ (Figure~\ref{fig:reflectEXY}). We take $u_{X,Y}$ to be this reflection, which is canonical by Remark~\ref{rem:canonicityu}.
\end{proof}
\begin{figure}[ht]
\centering
\begin{tikzpicture}[scale=1.0, >=stealth, baseline=(current bounding box.center)]
  \draw[dashed, gray] (-1.65,0) -- (1.65,0);
  \draw[thick] (0,0) circle (1.2);
  \draw[thick,->] (-7.5:1.2) arc (-7.5:7.5:1.2);
  \fill (0,1.2) circle (1.4pt); \node[above] at (0,1.27) {$(y,\nu)$};
  \fill (0,-1.2) circle (1.4pt); \node[below] at (0,-1.27) {$(x,-\varepsilon)$};
  \node at (-0.78,0) {$\beta$};
  \node at (0.78,0) {$\alpha$};
  \draw[fill=white] (-1.2,0) circle (1.7pt);
  \node[left] at (-1.32,0) {\small $-1$};
  \node at (0,-2.0) {$E_{X,Y}$};
\end{tikzpicture}
\qquad $\xrightarrow{\ u_{X,Y}\ }$ \qquad
\begin{tikzpicture}[scale=1.0, >=stealth, baseline=(current bounding box.center)]
  \draw[dashed, gray] (-1.65,0) -- (1.65,0);
  \draw[thick] (0,0) circle (1.2);
  \draw[thick,->] (-7.5:1.2) arc (-7.5:7.5:1.2);
  \fill (0,1.2) circle (1.4pt); \node[above] at (0,1.27) {$(x,\varepsilon)$};
  \fill (0,-1.2) circle (1.4pt); \node[below] at (0,-1.27) {$(y,-\nu)$};
  \node at (-0.78,0) {$\beta$};
  \node at (0.78,0) {$\alpha$};
  \draw[fill=white] (-1.2,0) circle (1.7pt);
  \node[left] at (-1.32,0) {\small $-1$};
  \node at (0,-2.0) {$E_{Y,X}$};
\end{tikzpicture}
\caption{The isomorphism $u_{X,Y}$ of Lemma~\ref{lem:EXYdual} as the reflection of the defect circle across the horizontal axis (dashed). Total orientation reversal followed by $z \mapsto \bar z$ fixes the basepoint $-1$ and the intervals $\beta$ (left), $\alpha$ (right), exchanges the upper and lower halves, and flips the coorientation signs, carrying $E_{X,Y}$ ($Y$ above, $X$ below) to $E_{Y,X}$ ($X$ above, $Y$ below). A single $Y$- and $X$-point are drawn; general lists are reflected pointwise.}
\label{fig:reflectEXY}
\end{figure}

We abbreviate $E := E_{X,Y}$, $E' := E_{Y,X}$, and write $h := h_{E} \circ (u_{X,Y}^{-1} \sqcup \mathrm{id}) \colon E' \sqcup E \to \varnothing$ for the evaluation bent cylinder presented on the standard circles.

\emph{Definition of the dagger.} As in dimension one (Section~\ref{sec:dim1}), the dagger is the reflection isomorphism furnished by $\rho$.

\begin{definition}
\label{def:daggeronTZ}
By Lemma~\ref{lem:EXYdual} the reflected circle $R(E)$ is identified with $E' = E_{Y,X}$ through $u_{X,Y}$; composing $\cZ(u_{X,Y})$ with $\rho_E^{-1} \colon \overline{\cZ(E)} \xrightarrow{\ \sim\ } \cZ(R(E))$ gives, for $\varphi \in \operatorname{Hom}(X,Y) = \cZ(E)$,
\begin{equation}
\label{eq:daggerdef}
  \varphi^\dg := \cZ(u_{X,Y})\big(\rho_E^{-1}(\overline{\varphi})\big) \;\in\; \cZ(E') = \operatorname{Hom}(Y,X) \, .
\end{equation}
\end{definition}

This is anti-linear and exhibits $\operatorname{Hom}(Y,X) = \cZ(E') \cong \cZ(R(E)) \cong \overline{\cZ(E)} = \overline{\operatorname{Hom}(X,Y)}$. Geometrically it is total orientation reversal of a decorated disk, re-read as a state on the reflected circle, as illustrated in Figure~\ref{fig:daggerdisk}.

\begin{figure}[ht]
\centering
\begin{tikzpicture}[scale=1.0, baseline={(0,-0.1)}]
  \draw (0,0) circle (1.2);
  \draw[thick,->] (0,-1.2) -- (0,-0.15);
  \draw[thick,->] (0,0) -- (0,0.7);
  \draw[thick] (0,0.7) -- (0,1.2);
  \fill (0,0) circle (2pt); \node[right] at (0.07,0) {$\varphi$};
  \node[below] at (0,-1.25) {$X$};
  \node[above] at (0,1.25) {$Y$};
\end{tikzpicture}
\hspace{0.7cm}
\raisebox{-2pt}{$\xmapsto{\ \dg\ }$}
\hspace{0.7cm}
\begin{tikzpicture}[scale=1.0, baseline={(0,-0.1)}]
  \draw (0,0) circle (1.2);
  \draw[thick,->] (0,-1.2) -- (0,-0.15);
  \draw[thick,->] (0,0) -- (0,0.7);
  \draw[thick] (0,0.7) -- (0,1.2);
  \fill (0,0) circle (2pt); \node[right] at (0.07,0) {$\varphi^\dg$};
  \node[below] at (0,-1.25) {$Y$};
  \node[above] at (0,1.25) {$X$};
\end{tikzpicture}
\caption{The dagger of a 2-morphism given by a decorated disk: total orientation reversal, re-read as a state on the reflected circle. }
\label{fig:daggerdisk}
\end{figure}

\begin{proposition}[properties of the dagger]
\label{prop:daggerprops}
The dagger \eqref{eq:daggerdef} is anti-linear and involutive, $(\varphi^\dg)^\dg = \varphi$, and recovers the hermitian form through the cup $h$:
\begin{equation}
  \cZ(h)(\varphi^\dg \otimes \psi) = b_E(\varphi, \psi)
  \qquad \text{for all } \psi \in \operatorname{Hom}(X,Y) \, .
\end{equation}
\end{proposition}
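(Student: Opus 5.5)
Anti-linearity is immediate from the definition: $\varphi \mapsto \overline{\varphi}$ is anti-linear as a map $\cZ(E) \to \overline{\cZ(E)}$, and both $\rho_E^{-1}$ and $\cZ(u_{X,Y})$ are $\bC$-linear. For involutivity I would apply \eqref{eq:daggerdef} twice:
\[
(\varphi^\dg)^\dg = \cZ(u_{Y,X})\,\rho_{E'}^{-1}\Big(\overline{\cZ(u_{X,Y})\rho_E^{-1}(\overline{\varphi})}\Big).
\]
Naturality of $\rho$ with respect to the morphism $u_{X,Y}\colon R(E)\to E'$ gives $\rho_{E'}^{-1}\circ\overline{\cZ(u_{X,Y})} = \cZ(R(u_{X,Y}))\circ\rho_{R(E)}^{-1}$. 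The coherence condition $\overline{\rho}\circ(\rho\circ R)=\mathrm{id}_\cZ$ evaluated at $E$ says that $\rho_{R(E)}^{-1}\circ\overline{\rho_E^{-1}}$ is the identity of $\cZ(E)$, after the strict identification $R^2(E)=E$. Together these reduce the expression to $\cZ\big(u_{Y,X}\circ R(u_{X,Y})\big)(\varphi)$.

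It then remains to check the geometric identity $u_{Y,X}\circ R(u_{X,Y})=\mathrm{id}_{E}$ as morphisms $E=R(E')\to E$. Both are induced by the reflection $z\mapsto\bar z$ of the rigidified standard circle, and $R$ does not change the underlying map. The composite is therefore the diffeomorphism $z\mapsto z$, which is decoration preserving. By Remark~\ref{rem:canonicityu} it is the identity mapping cylinder, so its image under $\cZ$ is the identity. This step is where care with conventions is needed, namely that $R$ applied to the mapping cylinder of $u_{X,Y}$ is again the mapping cylinder of the same map read in the reversed orientation. I expect this bookkeeping to be the main obstacle, though it is routine.

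For the pairing I would unwind the definitions. Recall $h=h_E\circ(u_{X,Y}^{-1}\sqcup\mathrm{id})$ and $b_E=\cZ(h_E)\circ(\rho_E^{-1}\otimes\mathrm{id})$. Then
\[
\cZ(h)(\varphi^\dg\otimes\psi)=\cZ(h_E)\big(\cZ(u_{X,Y})^{-1}\cZ(u_{X,Y})\rho_E^{-1}(\overline\varphi)\otimes\psi\big)=\cZ(h_E)(\rho_E^{-1}(\overline\varphi)\otimes\psi)=b_E(\varphi,\psi),
\]
using functoriality and monoidality of $\cZ$ on the disjoint union $u_{X,Y}^{-1}\sqcup\mathrm{id}$. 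This mirrors the one-dimensional argument of Section~\ref{sec:dim1}.
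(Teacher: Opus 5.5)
Your proposal is correct and follows the same route as the paper. Involutivity comes from the coherence $\overline{\rho}\circ(\rho\circ R)=\mathrm{id}_\cZ$ together with $u_{Y,X}\circ R(u_{X,Y})=\mathrm{id}_E$; this is the paper's $u_{Y,X}=R(u_{X,Y})^{-1}$, i.e.\ the fact that $z\mapsto\bar z$ is an involution. The cup formula comes from directly unwinding $h$ and $b_E$, and you make explicit the naturality of $\rho$ along $u_{X,Y}$ that the paper leaves implicit. One small slip: the composite $u_{Y,X}\circ R(u_{X,Y})$ is a morphism $E=R(R(E))\to R(E')\to E$, not a morphism out of $R(E')$.
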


\begin{proof}
\emph{Involutivity.} As in dimension one, this is immediate from the equivariance coherence $\overline{\rho} \circ (\rho \circ R) = \mathrm{id}$ of Definition~\ref{def:hermitiandefectTQFT}, together with $u_{Y,X} = R(u_{X,Y})^{-1}$---the statement that the reflection $z \mapsto \bar z$ of the standard circles (Remark~\ref{rem:canonicityu}) is an involution. 

\emph{Cup characterization.} With $b_E = \cZ(h_E) \circ (\rho_E^{-1} \otimes \mathrm{id})$ (Definition~\ref{def:RPproperty}) and $h = h_E \circ (u_{X,Y}^{-1} \sqcup \mathrm{id})$,
\begin{equation}
  \cZ(h)(\varphi^\dg \otimes \psi)
  = \cZ(h_E)\big(\cZ(u_{X,Y}^{-1})(\varphi^\dg) \otimes \psi\big)
  = \cZ(h_E)\big(\rho_E^{-1}(\overline{\varphi}) \otimes \psi\big)
  = b_E(\varphi, \psi) \, ,
\end{equation}
so the bent-cylinder pairing through $h$ recovers the hermitian form.
\end{proof}

\emph{Functoriality.}

\begin{lemma}
\label{lem:daggerfunctorial}
The maps \eqref{eq:daggerdef} assemble into a functor $\dg \colon \cT_\cZ \to \cT_\cZ^{2\op}$ which is the identity on objects and 1-morphisms:
\begin{enumerate}
  \item $(\Psi \circ \Phi)^\dg = \Phi^\dg \circ \Psi^\dg$ and $(1_X)^\dg = 1_X$;
  \item $(\widetilde{\Phi} \otimes \Phi)^\dg = \widetilde{\Phi}^\dg \otimes \Phi^\dg$ for horizontal composition.
\end{enumerate}
\end{lemma}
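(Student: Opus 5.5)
The plan is to reduce each identity to two facts: naturality of $\rho$ with respect to bordisms, and a geometric identification of the reflected composition bordism with the bordism computing the opposite composition. Both vertical and horizontal composition are $\cZ$ of a pair of pants, and $\dg$ is $\cZ(u)\circ\rho^{-1}\circ\overline{(-)}$. The general principle is therefore the following. For any bordism $P \colon E_1 \sqcup E_2 \to E_3$, naturality and monoidality of $\rho$ give
\begin{equation}
  \rho_{E_3}^{-1}\big(\overline{\cZ(P)(\Psi \otimes \Phi)}\big) = \cZ(R(P))\big(\rho_{E_1}^{-1}(\overline{\Psi}) \otimes \rho_{E_2}^{-1}(\overline{\Phi})\big) \, .
\end{equation}
Applying $\cZ(u_{E_3})$, we are reduced to a purely geometric identity in $\Bord_2^{\textrm{def}}(\bD)$:
\begin{equation}
  u_{E_3} \circ R(P) \circ (u_{E_1}^{-1} \sqcup u_{E_2}^{-1}) = P' \, ,
\end{equation}
where $P'$ is the pair of pants defining the reversed (respectively, the same) composition, possibly up to the symmetry swapping the two incoming circles.

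For (1), take $P = P_{X,Y,Z} \colon E_{Y,Z} \sqcup E_{X,Y} \to E_{X,Z}$. After conjugating by the reflections $u$, which implement $z \mapsto \bar z$ on each boundary circle, the reflected bordism $R(P_{X,Y,Z})$ is diffeomorphic to the image of $P_{X,Y,Z}$ under the reflection of the plane across the horizontal axis, with all strata orientations reversed. Concretely, I would draw the pair of pants as a disk with two holes stacked vertically, $E_{X,Y}$ below $E_{Y,Z}$. The reflection swaps the vertical order, so the $Y$-lines now join $E_{Z,Y}$ (lower hole) to $E_{Y,X}$ (upper hole). Reversing orientations, the outer circle becomes $E_{Z,X}$ and the lines carry the correct orientations: reflection flips the orientation of the plane, and total reversal restores the convention of Figure~\ref{fig:coorientation}, exactly as in the proof of Lemma~\ref{lem:EXYdual}. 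The resulting bordism is isotopic, rel boundary, to $P_{Z,Y,X}$ precomposed with the symmetry of $\sqcup$, which gives $(\Psi\circ\Phi)^\dg = \Phi^\dg\circ\Psi^\dg$. For the unit, $R(D_X)$ is the disk with parallel lines reflected; after $u_{X,X}$ it is $D_X$ again, and $\rho_\varnothing$ is the monoidal unit constraint, so $1_X^\dg = 1_X$.

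For (2), the horizontal pair of pants places the two holes side by side, $E_{X,Y}$ and $E_{\widetilde X,\widetilde Y}$, with the outer circle $E_{\widetilde X\otimes X,\widetilde Y\otimes Y}$. Reflection across the horizontal axis preserves the left-right arrangement, so after the identifications $u$ the reflected bordism is again the horizontal pair of pants, now for $E_{Y,X}$ and $E_{\widetilde Y,\widetilde X}$, with no reordering. Hence $\dg$ preserves horizontal composition. Throughout, anti-linearity and compatibility with tensor products are automatic, since $\rho$ is symmetric monoidal and conjugation is applied factorwise.

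The main obstacle is the geometric identification step. It requires checking carefully that the composite $u\circ R(P)\circ u^{-1}$ matches the standard presentation of the relevant pair of pants exactly, including the positions of marked points, the basepoint $-1$, the parametrizations and collars of the boundary circles, and the signs on all defect lines. It also requires the diffeomorphism to be decoration-preserving, so that it yields an equality of morphisms. I would handle this by realizing every standard bordism as a planar region symmetric under $z \mapsto \bar z$ (up to swapping holes), so that the global reflection of the plane restricts on each boundary circle to precisely $u$ (Remark~\ref{rem:canonicityu}). The identity then holds on the nose, with any residual discrepancy being an isotopy rel boundary.
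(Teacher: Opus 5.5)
Your proposal is correct and follows essentially the paper's proof. Both arguments combine naturality and monoidality of $\rho$ on the composition pants with the geometric identity $u_{X,Z}\circ R(P_{X,Y,Z})\circ(u_{Y,Z}^{-1}\sqcup u_{X,Y}^{-1}) = P_{Z,Y,X}\circ\sigma$, and its $\sigma$-free analogue for the horizontal pants. The one small difference is the unit law: the paper derives $(1_X)^\dg = 1_X$ formally from the composition law and the bijectivity of $\dg$, whereas you check $u_{X,X}\circ R(D_X) = D_X$ directly together with $\rho_\varnothing$, and both routes work.
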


\begin{figure}[ht]
\centering
\begin{tikzpicture}[scale=0.78, >=stealth, baseline=(current bounding box.center)]
  \draw[thick] (0,0) circle (1.7);
  \draw[thick] (0,-0.7) circle (0.36);
  \draw[thick] (0,0.7) circle (0.36);
  \draw[thick] (0,-0.34) -- (0,0.34);
  \draw[thick] (0,-1.06) -- (0,-1.7);
  \draw[thick] (0,1.06) -- (0,1.7);
  \fill (0,-0.34) circle (1.2pt);
  \fill (0,0.34) circle (1.2pt);
  \fill (0,-1.06) circle (1.2pt);
  \fill (0,1.06) circle (1.2pt);
  \fill (0,-1.7) circle (1.2pt);
  \fill (0,1.7) circle (1.2pt);
  \node[right] at (0.1,0) {$Y$};
  \node[right] at (0.1,-1.4) {$X$};
  \node[right] at (0.1,1.4) {$Z$};
  \node at (-0.95,-0.7) {$E_{X,Y}$};
  \node at (-0.95,0.7) {$E_{Y,Z}$};
  \node[right] at (1.72,0) {$E_{X,Z}$};
\end{tikzpicture}
\qquad $\xrightarrow{\ z\,\mapsto\,\bar z\ }$ \qquad
\begin{tikzpicture}[scale=0.78, >=stealth, baseline=(current bounding box.center)]
  \draw[thick] (0,0) circle (1.7);
  \draw[thick] (0,0.7) circle (0.36);
  \draw[thick] (0,-0.7) circle (0.36);
  \draw[thick] (0,-0.34) -- (0,0.34);
  \draw[thick] (0,1.06) -- (0,1.7);
  \draw[thick] (0,-1.06) -- (0,-1.7);
  \fill (0,-0.34) circle (1.2pt);
  \fill (0,0.34) circle (1.2pt);
  \fill (0,-1.06) circle (1.2pt);
  \fill (0,1.06) circle (1.2pt);
  \fill (0,-1.7) circle (1.2pt);
  \fill (0,1.7) circle (1.2pt);
  \node[right] at (0.1,0) {$Y$};
  \node[right] at (0.1,1.4) {$X$};
  \node[right] at (0.1,-1.4) {$Z$};
  \node at (-0.95,0.7) {$E_{Y,X}$};
  \node at (-0.95,-0.7) {$E_{Z,Y}$};
  \node[right] at (1.72,0) {$E_{Z,X}$};
\end{tikzpicture}
\caption{The composition pants $P_{X,Y,Z} \colon E_{Y,Z} \sqcup E_{X,Y} \to E_{X,Z}$ (left), in which the $Y$-strands of the two incoming circles are joined while the $X$- and $Z$-strands run out to $E_{X,Z}$. The reflection $z \mapsto \bar z$ of Remark~\ref{rem:canonicityu} sends each boundary circle to its dual ($E_{X,Y}\mapsto E_{Y,X}$, $E_{Y,Z}\mapsto E_{Z,Y}$, $E_{X,Z}\mapsto E_{Z,X}$), exchanges the two incoming circles, and leaves the $Y$-strands joined, carrying the pants to $P_{Z,Y,X} \circ \sigma$ (right); see~\eqref{eq:pantsreflection}.}
\label{fig:pantsreflection}
\end{figure}

\begin{proof}
By Definition~\ref{def:daggeronTZ} the dagger is $\varphi^\dg = \cZ(u_{X,Y})\,\rho_{E}^{-1}(\overline{\varphi})$, so functoriality is naturality of $\rho$ read on the composition pants, once we record how that pants behaves under reflection.

\emph{The composition pants under reflection.} For composable $X, Y, Z \colon \alpha \to \beta$ let $P_{X,Y,Z} \colon E_{Y,Z} \sqcup E_{X,Y} \to E_{X,Z}$ be the vertical composition pants of Section~\ref{sec:defectTQFT} in which the $Y$-strands of the two incoming circles are joined while the $X$- and $Z$-strands run out to $E_{X,Z}$ (Figure~\ref{fig:pantsreflection}, left). Under the identifications $u$ of Lemma~\ref{lem:EXYdual},
\begin{equation}
\label{eq:pantsreflection}
  u_{X,Z} \circ R(P_{X,Y,Z}) \circ \big(u_{Y,Z}^{-1} \sqcup u_{X,Y}^{-1}\big) = P_{Z,Y,X} \circ \sigma \, ,
\end{equation}
with $\sigma$ the transposition of the two incoming circles. Indeed, the reflection $z \mapsto \bar z$ defining the $u$'s (Remark~\ref{rem:canonicityu}) sends each boundary circle to its dual---$E_{X,Y}\mapsto E_{Y,X}$, $E_{Y,Z}\mapsto E_{Z,Y}$ and $E_{X,Z}\mapsto E_{Z,X}$---keeps the two $Y$-strands joined, and exchanges the two incoming circles, carrying the trinion to $P_{Z,Y,X}\circ\sigma$ by a decoration-preserving diffeomorphism of the standard models (Figure~\ref{fig:pantsreflection}, right). The analogous identity, without the transposition $\sigma$, holds for the horizontal composition pants: its two incoming circles sit side by side and are not exchanged by the reflection.

\emph{Naturality.} For $\Psi \in \operatorname{Hom}(Y,Z)$ and $\Phi \in \operatorname{Hom}(X,Y)$, naturality and monoidality of $\rho$ give
\begin{equation}
  \rho_{E_{X,Z}}^{-1}\big(\overline{\cZ(P_{X,Y,Z})(\Psi \otimes \Phi)}\big)
  = \cZ\big(R(P_{X,Y,Z})\big)\big(\rho_{E_{Y,Z}}^{-1}(\overline{\Psi}) \otimes \rho_{E_{X,Y}}^{-1}(\overline{\Phi})\big) \, .
\end{equation}
Since $\rho_{E_{Y,Z}}^{-1}(\overline{\Psi}) = \cZ(u_{Y,Z}^{-1})(\Psi^\dg)$ and $\rho_{E_{X,Y}}^{-1}(\overline{\Phi}) = \cZ(u_{X,Y}^{-1})(\Phi^\dg)$, applying $\cZ(u_{X,Z})$ and \eqref{eq:pantsreflection} gives
\begin{equation}
  (\Psi \circ \Phi)^\dg = \cZ\big(u_{X,Z} \circ R(P_{X,Y,Z}) \circ (u_{Y,Z}^{-1} \sqcup u_{X,Y}^{-1})\big)(\Psi^\dg \otimes \Phi^\dg) = \cZ(P_{Z,Y,X})(\Phi^\dg \otimes \Psi^\dg) = \Phi^\dg \circ \Psi^\dg \, .
\end{equation}
The unit law $(1_X)^\dg = 1_X$ is then formal: $1_X^\dg \circ \Phi^\dg = (\Phi \circ 1_X)^\dg = \Phi^\dg$ for every $\Phi$, and $\dg$ is a bijection by Proposition~\ref{prop:daggerprops}, so $1_X^\dg$ is a two-sided unit at $X$ and hence equals $1_X$. Part~(2) is identical, with the horizontal composition pants in place of $P_{X,Y,Z}$.
\end{proof}

This completes Part~1: $\dg$ is an anti-linear dagger functor, the identity on objects and 1-morphisms, with $\dg^2 = \mathrm{id}$, and the form $(\varphi, \psi) \mapsto \cZ(h)(\varphi^\dg \otimes \psi)$ on $\operatorname{Hom}(X,Y)$ \emph{is} the hermitian form $b_{E_{X,Y}}$. Whether these forms are positive definite---reflection positivity---is taken up in Section~\ref{sec:positivity}.

\paragraph{Part 2: compatibility with duality.}

By Section~\ref{sec:defectTQFT} the bicategory $\cT_\cZ$ has adjoints (the rainbow $\mathrm{ev}_X, \mathrm{coev}_X$), so the left-mate dual functor $(-)^L$ of Definition~\ref{def:strictO2} is defined. Since $\dg$ is a functor (Part~1), exhibiting $\cT_\cZ$ as an $\Or(2)$-dagger bicategory reduces to the second datum of Definition~\ref{def:strictO2}: that $(-)^L$ commutes with $\dg$, with unitary coherences. The abstract consequences of Section~\ref{sec:strictO2} then follow automatically; below we supply only the geometric identifications and collect the consequences at the end.

\emph{Duality data.}

\begin{lemma}
\label{lem:unitaryduality}
$(\mathrm{ev}_X)^\dg = \widetilde{\mathrm{coev}}_X$ and $(\mathrm{coev}_X)^\dg = \widetilde{\mathrm{ev}}_X$.
\end{lemma}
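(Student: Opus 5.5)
The plan is to prove both identities the same way $(\Psi\circ\Phi)^\dg=\Phi^\dg\circ\Psi^\dg$ was proved in Lemma~\ref{lem:daggerfunctorial}. We first express each side as the value of $\cZ$ on a decorated disk, then show geometrically that the reflection carries one disk to the other, and finally invoke naturality of $\rho$. Concretely, $\mathrm{ev}_X = \cZ(D^{\mathrm{ev}}_X)(1) \in \cZ(E_{X^L\otimes X,1_\alpha})$, where $D^{\mathrm{ev}}_X\colon\varnothing\to E_{X^L\otimes X,1_\alpha}$ is the rainbow disk of~\cite[eq.~(2.18)]{Car16}. Similarly, $\widetilde{\mathrm{coev}}_X=\cZ(\widetilde D^{\mathrm{coev}}_X)(1)\in\cZ(E_{1_\alpha,X^L\otimes X})$, where $\widetilde D^{\mathrm{coev}}_X$ is the rainbow disk with all orientations and orders reversed, as described in Section~\ref{sec:defectTQFT}. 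The analogous description applies to $\mathrm{coev}_X$ and $\widetilde{\mathrm{ev}}_X$.

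The first step is to compute the dagger of a state given by a disk $D\colon\varnothing\to E$. Naturality of $\rho$ on the morphism $D$ gives $\rho_E\circ\cZ(R(D)) = \overline{\cZ(D)}\circ\rho_\varnothing$. Monoidality forces $\rho_\varnothing=\mathrm{id}_\bC$ under $\overline{\bC}\cong\bC$, so $\rho_E^{-1}(\overline{\cZ(D)(1)})=\cZ(R(D))(1)$. By Definition~\ref{def:daggeronTZ} we therefore get
\begin{equation*}
  \big(\cZ(D)(1)\big)^\dg=\cZ\big(u_{X,Y}\circ R(D)\big)(1).
\end{equation*}
In words, the dagger of a disk state is the disk reflected by $z\mapsto\bar z$ with all orientations reversed. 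This is exactly Figure~\ref{fig:daggerdisk} made precise.

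The second step is the geometric identification
\begin{equation*}
  u\circ R(D^{\mathrm{ev}}_X)\cong\widetilde D^{\mathrm{coev}}_X
\end{equation*}
as decorated disks relative to the boundary, together with the analogous identification for $\mathrm{coev}$. The reflection across the real axis swaps upper and lower halves of the circle. This sends the outer circle $E_{X^L\otimes X,1_\alpha}$ to $E_{1_\alpha,X^L\otimes X}$, and it turns the nested arcs of the cap (living on the lower half) into nested arcs on the upper half, which is a cup. Total orientation reversal $R$ flips the orientation of each arc. The combined effect is precisely the description "reverse all orientations and orders" that defines $\widetilde{\mathrm{coev}}_X$.

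I expect this bookkeeping to be the main obstacle. One must check three things, for each strand $x_i$ of the rainbow:
\begin{itemize}
\item the resulting orientation matches the convention in $\widetilde{\mathrm{coev}}_X$;
\item the nesting order is correct;
\item the source/target labelling is preserved, using the coorientation observation of Section~\ref{sec:proposal}.
\end{itemize}
The cleanest route is to reduce to a single strand $X=((x,\varepsilon))$. Multiplicativity of the rainbow data (Section~\ref{sec:defectTQFT}) together with compatibility of $\dg$ with horizontal composition (Lemma~\ref{lem:daggerfunctorial}(2)) then handles general lists, since the rainbow for a list is the horizontal composite of the single-strand ones up to identities. The single-strand case is a direct picture check, done separately for $\varepsilon=+$ and $\varepsilon=-$.

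Finally, the second identity $(\mathrm{coev}_X)^\dg=\widetilde{\mathrm{ev}}_X$ can be handled in either of two ways. One option is to repeat the same argument with the cup disk. The other is to deduce it from the first: apply the first identity to $X^L$, use $X^{LL}=X$ on lists, and use the involutivity $\dg^2=\mathrm{id}$ from Proposition~\ref{prop:daggerprops}, checking that the rainbow for $X^L$ is the reversed rainbow for $X$.
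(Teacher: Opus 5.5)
Your proposal is correct and follows essentially the paper's argument: by naturality and monoidality of $\rho$ (including $\rho_\varnothing=\mathrm{id}$), the dagger of a disk state is $\cZ$ of the totally orientation-reversed disk read on the reflected circle via $u$, and that reflected evaluation rainbow is the same decorated disk as the tilde coevaluation rainbow. The paper also first observes via the Zorro identities that the daggers are right adjunction data, which your on-the-nose identification makes unnecessary; your single-strand reduction (which uses both parts of Lemma~\ref{lem:daggerfunctorial} and multiplicativity of the tilde rainbows) and the derivation of the second identity from the first via $X^{LL}=X$ are optional refinements of the same route.
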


\begin{proof}
Since $\dg$ is a functor (Lemma~\ref{lem:daggerfunctorial}), applying $\dg$ to the two Zorro identities for $(\mathrm{ev}_X, \mathrm{coev}_X)$ shows, exactly as in Lemma~\ref{lem:abstractunitaryduality}, that $(\mathrm{coev}_X)^\dg$ and $(\mathrm{ev}_X)^\dg$ satisfy the Zorro identities for an adjunction $X \dashv X^L$; that is, they are right adjunction data for $X$. It remains to identify them on the nose with the geometric tilde data of~\cite{Car16}. All four 2-morphisms are $\cZ$ of rainbow disks. By Definition~\ref{def:daggeronTZ} and naturality of $\rho$, the dagger of $\cZ$ of a disk is $\cZ$ of its total orientation reversal; thus $(\mathrm{ev}_X)^\dg$ is $\cZ$ of the evaluation rainbow disk with all strata orientations reversed, read on the reflected boundary circle. Reversing the orientations of the nested rainbow arcs and reflecting the disk, which reverses the order of the marked points, is precisely the prescription ``reverse all orientations and orders'' that defines $\widetilde{\mathrm{coev}}_X$ from $\mathrm{ev}_X$ in~\cite[eqs.~(2.18),(2.19)]{Car16}; under the boundary identification $u$ of Lemma~\ref{lem:EXYdual} the two decorated disks coincide, whence $(\mathrm{ev}_X)^\dg = \widetilde{\mathrm{coev}}_X$. The second identity is the same computation with the roles of $\mathrm{ev}$ and $\mathrm{coev}$ interchanged.
\end{proof}

This is the defect-bicategory analogue of the unitary dual structure of~\cite{Bar25}: the dagger exchanges left and right adjunction data. By Lemmas~\ref{lem:daggerfunctorial} and~\ref{lem:unitaryduality} it therefore carries the left mate $\varphi^L$ (via $\mathrm{ev}, \mathrm{coev}$) to the right mate of $\varphi^\dg$ (via the tilde data), so the compatibility $(\varphi^\dg)^L = (\varphi^L)^\dg$ amounts to the equality of left and right mates---that is, to pivotality.

\emph{Pivotality.} For $\varphi \in \operatorname{Hom}(X,Y)$ write
\begin{equation}
  \varphi^L = (\mathrm{ev}_Y \otimes \mathrm{id}) \circ (\mathrm{id} \otimes \varphi \otimes \mathrm{id}) \circ (\mathrm{id} \otimes \mathrm{coev}_X), \quad
  \varphi^R = (\mathrm{id} \otimes \widetilde{\mathrm{ev}}_Y) \circ (\mathrm{id} \otimes \varphi \otimes \mathrm{id}) \circ (\widetilde{\mathrm{coev}}_X \otimes \mathrm{id})
\end{equation}
for the left and right mates. Their agreement is the \emph{pivotality} of $\cT_\cZ$, a theorem of~\cite{DKR11}, with a transparent geometric origin: the mates are computed by transporting the defect lines of $E_{X,Y}$ through a half-turn---the nested arcs of the rainbow disks $\mathrm{ev}, \mathrm{coev}$ carry them by $+\pi$ and the tilde data $\widetilde{\mathrm{ev}}, \widetilde{\mathrm{coev}}$ by $-\pi$ (Figure~\ref{fig:annuli})---and the two differ by a full rotation; the full-twist annulus is the image of the identity cylinder under a Dehn twist and hence equal to it in $\Bord_2^{\textrm{def}}(\bD)$, so that $\varphi^L = \varphi^R$. (This uses that morphisms are bordisms taken up to diffeomorphism relative to the boundary, as in~\cite{DKR11}.) In the same picture a reflection conjugates the $+\pi$ rotation to the $-\pi$ one, exhibiting pivotality and the compatibility of $\dg$ with $(-)^L$ as the planar shadow of the $\Or(2)$-fixed-point description.

\begin{figure}[ht]
\centering
\begin{tikzpicture}[scale=1.0]
  \draw (0,0) circle (1.5);
  \draw (0,0) circle (0.55);
  \draw[thick,->] plot[domain=0:1, samples=50] ({(0.55+0.95*\x)*cos(90+180*\x)}, {(0.55+0.95*\x)*sin(90+180*\x)});
  \draw[thick,->] plot[domain=0:1, samples=50] ({(0.55+0.95*\x)*cos(270+180*\x)}, {(0.55+0.95*\x)*sin(270+180*\x)});
  \node at (0,-2) {$A^{+\pi}_{X,Y}$};
\end{tikzpicture}
\hspace{1.5cm}
\begin{tikzpicture}[scale=1.0]
  \draw (0,0) circle (1.5);
  \draw (0,0) circle (0.55);
  \draw[thick,->] plot[domain=0:1, samples=80] ({(0.55+0.95*\x)*cos(90+360*\x)}, {(0.55+0.95*\x)*sin(90+360*\x)});
  \draw[thick,->] plot[domain=0:1, samples=80] ({(0.55+0.95*\x)*cos(270+360*\x)}, {(0.55+0.95*\x)*sin(270+360*\x)});
  \node at (0,-2) {$A^{2\pi}_{X,Y} = \mathrm{id}_{E_{X,Y}}$};
\end{tikzpicture}
\caption{The half-twist annulus computing the mates (left), transporting the defect lines of $E_{X,Y}$ through $\pm\pi$, and the full-twist annulus (right), the image of the identity cylinder under a Dehn twist and hence equal to it. Two defect lines are drawn; in general all lines of $E_{X,Y}$ are transported.}
\label{fig:annuli}
\end{figure}

\begin{proposition}[pivotality and the compatibility of $\dg$ and $(-)^L$]
\label{prop:pivotal}
The bicategory $\cT_\cZ$ is strictly pivotal---$\varphi^L = \varphi^R$ for every 2-morphism $\varphi$, and the canonical pivotal structure $\delta$ of Proposition~\ref{prop:abstractpivotal} is the identity---and $(-)^L$ commutes with $\dg$: $(\varphi^\dg)^L = (\varphi^L)^\dg$ for all $\varphi$, the second datum of Definition~\ref{def:strictO2}.
\end{proposition}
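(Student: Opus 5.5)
The plan is to prove the three claims in order: strict pivotality, then $\delta=\mathrm{id}$, then the commutation of $(-)^L$ with $\dg$. The first claim is the main geometric input; the other two then follow formally from earlier results.

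\emph{Strict pivotality.} Write $\varphi \in \operatorname{Hom}(X,Y)=\cZ(E_{X,Y})$. By functoriality of $\cZ$ and the gluing description of horizontal and vertical composition, $\varphi^L$ is the image of $\varphi$ under a single annulus bordism $A^{+\pi} \colon E_{X,Y} \to E_{Y^L,X^L}$. This annulus is obtained by gluing the rainbow disks $\mathrm{ev}_Y$ and $\mathrm{coev}_X$ and the identity disks onto the composition pants; after an isotopy, its lines are transported by a half turn (Figure~\ref{fig:annuli}). In the same way, $\varphi^R$ is the image of $\varphi$ under $A^{-\pi}$, built from the tilde data. So it suffices to show $A^{+\pi}=A^{-\pi}$ as morphisms of $\Bord_2^{\textrm{def}}(\bD)$. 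Their composite $(A^{-\pi})^{-1}\circ A^{+\pi}$ is the full-twist annulus $E_{X,Y}\to E_{X,Y}$. The Dehn twist supported in a collar is a decoration-preserving diffeomorphism of the cylinder, and it restricts to the identity on both boundary circles. It carries the product cylinder to the full-twist annulus, so the two are equal as bordisms up to diffeomorphism relative to the boundary, and hence $A^{+\pi}=A^{-\pi}$. The main obstacle is here: checking that the rigid standard positions of the marked points (Remark~\ref{rem:canonicityu}) are compatible with this rotation picture, i.e.\ that the half-turn indeed lands on the standard model of $E_{Y^L,X^L}$ rather than on a rotated copy. I would handle this by placing all lines radially and using the fact that the basepoint $-1$ carries no marked point, so the relevant rotation is the canonical one.

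\emph{$\delta=\mathrm{id}$.} Since the rainbow data is normalized and multiplicative, $X^{LL}=X$ on lists. The composite $(\widetilde{\mathrm{ev}}_X\otimes\mathrm{id})\circ(\mathrm{id}\otimes\mathrm{coev}_{X^L})$ is $\cZ$ of a disk containing one zigzag strand per line of $X$. Straightening each zigzag by isotopy gives the identity disk $D_X$, so $\delta_X = 1_X$. This uses that $\mathrm{coev}_{X^L}$, as a decorated disk, coincides with the mirrored $\mathrm{ev}$-type rainbow, which can be read off from the construction in~\cite{Car16}.

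\emph{Compatibility with $\dg$.} Apply $\dg$ to the formula for $\varphi^L$. By Lemma~\ref{lem:daggerfunctorial}, $\dg$ reverses vertical composition and preserves horizontal composition. By Lemma~\ref{lem:unitaryduality}, it sends $\mathrm{ev}_Y\mapsto\widetilde{\mathrm{coev}}_Y$ and $\mathrm{coev}_X\mapsto\widetilde{\mathrm{ev}}_X$. This gives
\[
(\varphi^L)^\dg=(\mathrm{id}\otimes\widetilde{\mathrm{ev}}_X)\circ(\mathrm{id}\otimes\varphi^\dg\otimes\mathrm{id})\circ(\widetilde{\mathrm{coev}}_Y\otimes\mathrm{id})=(\varphi^\dg)^R,
\]
where the right mate is taken for $\varphi^\dg\colon Y\Rightarrow X$. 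By the first step, $(\varphi^\dg)^R=(\varphi^\dg)^L$, which is the required commutation. Finally, since the structure is strict, $\nu$ and $\iota$ are identities (Remark~\ref{rem:strictcase}), so they are trivially unitary. This completes the second datum of Definition~\ref{def:strictO2}.
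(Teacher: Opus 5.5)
Your proposal is correct and takes essentially the same route as the paper. Pivotality comes from the $\pm\pi$ half-twist annuli, which differ by a Dehn twist; the paper sketches this and attributes it to Davydov--Kong--Runkel. Your zigzag-straightening argument for $\delta=\mathrm{id}$ makes explicit what the paper only asserts from $X^{LL}=X$ on the nose. The compatibility $(\varphi^L)^\dg=(\varphi^\dg)^R=(\varphi^\dg)^L$ then follows formally from Lemmas~\ref{lem:daggerfunctorial} and~\ref{lem:unitaryduality}, exactly as in the paper. One small slip: $X^{LL}=X$ on lists holds simply because reversing order and signs is an involution, not because the rainbow data is normalized and multiplicative.
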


\begin{proof}
Pivotality $\varphi^L = \varphi^R$ is~\cite{DKR11}, whose geometric origin was sketched above; it is strict because $X^{LL} = X$ on the nose---orientation reversal of the defect lines being an involution---so $\delta = \mathrm{id}$. The compatibility is then formal: by Lemmas~\ref{lem:daggerfunctorial} and~\ref{lem:unitaryduality} the dagger carries the left mate of $\varphi$ to the right mate of $\varphi^\dg$, that is $(\varphi^L)^\dg = (\varphi^\dg)^R$, and pivotality applied to $\varphi^\dg$ gives $(\varphi^\dg)^L = (\varphi^\dg)^R = (\varphi^L)^\dg$.
\end{proof}

This completes the proof of Theorem~\ref{thm:extraction}: Part~1 provides the dagger $\dg$ and Part~2 the dual functor $(-)^L$, compatible by Proposition~\ref{prop:pivotal}. The comparison $2$-morphisms $\nu, \iota$ of Definition~\ref{def:strictO2} are unitary, being values of $\cZ$ on the isotopy cylinders relating the nested and concatenated rainbow configurations of Section~\ref{sec:defectTQFT}, whose daggers are the reverse isotopy cylinders; thus $\cT_\cZ$ is an $\Or(2)$-dagger bicategory. The abstract consequences of Section~\ref{sec:strictO2} now specialize to $\cT_\cZ$: the daggered, two-sided adjunction data of Lemma~\ref{lem:abstractunitaryduality} (identified with the tilde data of~\cite{Car16} in Lemma~\ref{lem:unitaryduality}), the canonical unitary pivotal structure of Proposition~\ref{prop:abstractpivotal} (trivial, by Proposition~\ref{prop:pivotal}), the conjugation $(-)^* = \dg^b$ of Lemma~\ref{lem:conjugation}, and hence the bi-involutive structure of~\cite{HP17} on each endomorphism category $\cT_\cZ(\alpha,\alpha)$.

\section{Reflection positivity}
\label{sec:positivity}

The construction of Section~\ref{sec:extractionsection} used only the reflection structure $(\cZ, \rho)$ of Definition~\ref{def:hermitiandefectTQFT}. We now add positivity.

Suppose $(\cZ, \rho)$ is \emph{reflection positive} (Definition~\ref{def:RPproperty}): the form $b_E$ is positive definite for every object $E$. Equivalently, by~\cite[Thm.~2.3.41]{Ste23}, $\cZ$ is a symmetric monoidal dagger functor to $\Hilb$ (Definition~\ref{def:RPdefectTQFT}). Then every $2$-morphism space $\operatorname{Hom}(X,Y) = \cZ(E_{X,Y})$ is a Hilbert space, the pairing $(\varphi, \psi) \mapsto \cZ(h)(\varphi^\dg \otimes \psi) = b_{E_{X,Y}}(\varphi, \psi)$ of Theorem~\ref{thm:extraction} is its inner product, with respect to which the dagger $\dg \colon \operatorname{Hom}(X,Y) \to \operatorname{Hom}(Y,X)$ is anti-unitary. As noted in Remark~\ref{rem:positivityforfree}, in the $\Hilb$ formulation this positivity is automatic, wired into the target. 

We now explain the connection to a concept introduced in~\cite{CHFHS24}: 
\begin{definition} 
	Let $\cB$ be an $\Or(2)$-dagger bicategory. A \emph{spherical weight} is a collection of linear functionals $\psi_\alpha \colon \operatorname{End}(1_\alpha) \to \bC$, one for each object $\alpha$, satisfying $\psi_\alpha(f^\dagger f) \geq 0$ for all $f \in \operatorname{End}(1_\alpha)$, with equality if and only if $f = 0$, and such that, for every $1$-morphism $X \colon \alpha \to \beta$ and every $2$-morphism $f \colon X \Rightarrow X$, the right and left traces of Definition~\ref{def:traces} satisfy
	\begin{align}
		\psi_\alpha (\operatorname{tr}^R(f)) = \psi_\beta (\operatorname{tr}^L(f)) \, .
	\end{align}
\end{definition}

\begin{figure}[ht]
\centering
\includegraphics[width=0.7\textwidth]{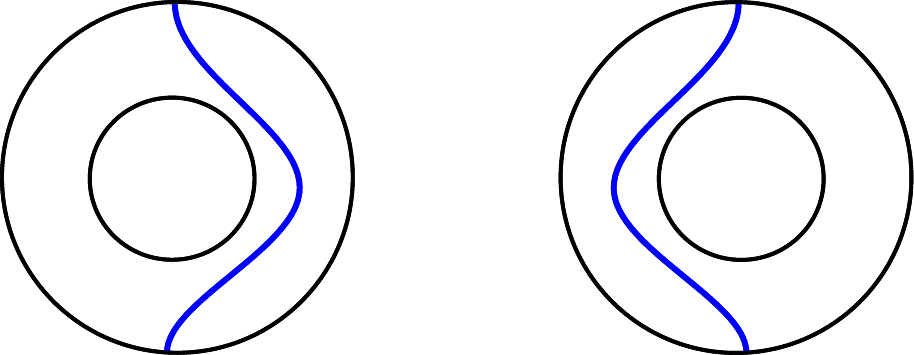}
\caption{The bordisms representing the right trace $\operatorname{tr}^R$ (left) and the left trace $\operatorname{tr}^L$ (right), to be read as morphisms from the outer circle to the inner one. Filling in the inner disk corresponds to applying the spherical weight; the two bordisms then agree by an isotopy, showing that the right and left traces coincide once the spherical weight is applied.}
\label{fig:spherical}
\end{figure}

This is the essential extra datum of a 3-Hilbert space beyond the $\Or(2)$-dagger structure: together with the Hilbert space structure on the $2$-morphism spaces coming from positivity, it makes all hom-categories 2-Hilbert spaces~\cite{CHFHS24}. The spherical weight has a natural interpretation in terms of the defect TQFT: it is the partition function $\cZ$ assigns to the sphere with the point defect $f$ inserted. Since we only have indirect access to the point defects through the $D_0$-completion, we now give the precise connection in terms of the data more directly available to us.
\begin{proposition}
Let $\cT_\cZ$ be the $\Or(2)$-dagger bicategory associated to a reflection positive defect TQFT. The linear maps
\begin{align}
	\psi_\alpha\colon \operatorname{End}(1_\alpha) = \cZ(S^1_\alpha) & \to \bC \\ 
	\varphi &\mapsto b_{S^1_\alpha}(\mathrm{id}, \varphi)
\end{align}
define a spherical weight.
\end{proposition}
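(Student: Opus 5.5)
Throughout, $S^1_\alpha = E_{1_\alpha,1_\alpha}$ is the undecorated circle labelled $\alpha$, and $\mathrm{id} = 1_{1_\alpha} = \cZ(D_\alpha)(1)$ is the state of the plain disk. The proof reduces both axioms of a spherical weight to the two facts already established: the cup characterisation of the hermitian form (Proposition~\ref{prop:daggerprops}) and the functoriality of $\dg$ (Lemma~\ref{lem:daggerfunctorial}).

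\emph{Step 1: an explicit formula for $\psi_\alpha$.} By Proposition~\ref{prop:daggerprops} we have
\begin{equation}
  \psi_\alpha(\varphi) = b_{S^1_\alpha}(\mathrm{id},\varphi) = \cZ(h)(\mathrm{id}^\dg \otimes \varphi).
\end{equation}
By Lemma~\ref{lem:daggerfunctorial}, $\mathrm{id}^\dg = \mathrm{id}$. Gluing the disk $D_\alpha$ into one leg of the bent cylinder $h$ gives a disk $\varnothing \leftarrow S^1_\alpha$. Hence $\psi_\alpha(\varphi) = \cZ(\overline{D}_\alpha)(\varphi)$, which is "cap off with a disk". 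This is the partition function of the sphere with $\varphi$ inserted, as the text anticipates. Linearity is clear, since $b$ is linear in its second slot.

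\emph{Step 2: positivity.} For $f \in \operatorname{End}(1_\alpha)$ I rewrite $\psi_\alpha(f^\dg f)$ as $b_{S^1_\alpha}(f,f)$. Concretely, $\cZ(h)(f^\dg \otimes f)$ is the pants composition $P$ followed by a cap. Gluing the vertical composition pants $P_{1,1,1}$ with the cap is isotopic to the bent cylinder $h$. Therefore
\begin{equation}
  \psi_\alpha(f^\dg\circ f) = \cZ(h)(f^\dg\otimes f) = b_{S^1_\alpha}(f,f).
\end{equation}
Reflection positivity (Definition~\ref{def:RPproperty}) says $b_{S^1_\alpha}$ is positive definite. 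So this quantity is $\ge 0$, with equality iff $f=0$.

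\emph{Step 3: sphericality.} For $f\colon X\Rightarrow X$ with $X\colon\alpha\to\beta$, consider the two composites $\psi_\alpha(\operatorname{tr}^R f)$ and $\psi_\beta(\operatorname{tr}^L f)$. By Step 1, together with Lemma~\ref{lem:unitaryduality} identifying $\widetilde{\mathrm{ev}},\widetilde{\mathrm{coev}}$ with the geometric tilde rainbow disks, each is $\cZ$ of a closed-off bordism $\varnothing \leftarrow E_{X,X}$ applied to $f$.
\begin{itemize}
  \item $\operatorname{tr}^R f$ corresponds to the disk in which the $X$-lines close into loops around the insertion with the $\alpha$-region on the outside; capping with $\psi_\alpha$ makes that outer region a disk.
  \item $\operatorname{tr}^L f$ corresponds to the disk in which the lines close the other way, with $\beta$ outside; capping with $\psi_\beta$ fills that outer region.
\end{itemize}
In both cases the outcome is the same decorated disk bounding $E_{X,X}$: a disk with the $X$-strands running from the lower to the upper marked points around the back of a sphere. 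The two differ by an isotopy rel boundary that slides the loop across the point at infinity, as in Figure~\ref{fig:spherical}. So the two bordisms are equal in $\Bord_2^{\textrm{def}}(\bD)$ and the values agree.

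\textbf{Main obstacle.} I expect Step 3 to be the delicate part. One has to check carefully, using the pants and rainbow conventions of Section~\ref{sec:defectTQFT}, that the composite bordisms really are the two pictured disks with matching orientations and labels on every region. In particular, the inner region must be labelled $\beta$ for one trace and $\alpha$ for the other, and after capping both must become the same decorated sphere-minus-disk. Steps 1 and 2 are formal consequences of the cup characterisation and the definition of reflection positivity.
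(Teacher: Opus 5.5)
Your proposal is correct and follows the paper's proof. In both, $\psi_\alpha$ is capping off with a disk, and $\psi_\alpha(f^\dg f)$ is the sphere obtained by gluing $f$ to $f^\dg$, which read as the bent-cylinder pairing equals $b_{S^1_\alpha}(f,f)$; sphericality is the isotopy of Figure~\ref{fig:spherical}. Two of your points fill in details the paper leaves to the figure: you invoke Lemma~\ref{lem:unitaryduality} to identify the traces with geometric rainbow bordisms, and you observe that both capped bordisms are the same disk with parallel strands bounding $E_{X,X}$, unique up to isotopy rel boundary.
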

\begin{proof}
	Write $\varphi^\dagger\varphi$ for the vertical composite in $\operatorname{End}(1_\alpha) = \cZ(S^1_\alpha)$. Inserting it on the capped circle yields $\cZ$ of the sphere obtained by gluing the disk carrying $\varphi$ to its dagger along $S^1_\alpha$; reading this same sphere as the cup pairing of $\varphi$ with itself gives $b_{S^1_\alpha}(\mathrm{id}, \varphi^\dagger\varphi) = b_{S^1_\alpha}(\varphi, \varphi)$. Positivity and faithfulness of $\psi_\alpha$ then follow from the positive-definiteness of $b$. Finally, $\psi_\alpha(\operatorname{tr}^R(f)) = \psi_\beta(\operatorname{tr}^L(f))$ follows from a geometric argument sketched in Figure~\ref{fig:spherical} and explained in its caption.
\end{proof}

Beyond the spherical weight, a full 3-Hilbert space requires some completeness and finiteness conditions, such as the existence of direct sums. These are expected to hold for topological defects in physical theories; this, for example, underlies the widely held belief that the line defects of a 2-dimensional TQFT form a fusion category~\cite{BT17}. Granting them, the defect bicategory $\cT_\cZ$ of a reflection positive defect TQFT is a 3-Hilbert space.

\end{document}